\documentclass[USenglish,oneside,twocolumn]{article}

\usepackage[utf8]{inputenc}%(only for the pdftex engine)
\usepackage[big]{dgruyter_NEW}
\usepackage{hyperref}
\usepackage{xcolor}
\usepackage{amssymb,amsmath,amsthm}
\usepackage{bm}
\usepackage{caption}
\usepackage{subcaption}
\usepackage[ruled]{algorithm2e}

\SetCommentSty{mycommfont}
\usepackage{algorithmic}
\usepackage{balance}

\newcommand{\AsymML}{AsymML}

\newcommand{\conv}{\text{Conv}}
\newcommand{\convtrust}{\text{Conv}_\text{T}}
\newcommand{\convuntrust}{\text{Conv}_\text{U}}

\newcommand{\ceiling}[1]{\left \lceil #1 \right \rceil}
\newcommand{\norm}[1]{\left \| #1 \right \|}

\newcommand{\XT}{X^{\text{(T)}}}
\newcommand{\XUT}{X^{\text{(U)}}}
\newcommand{\XUTn}{X^{'\text{(U)}}}
\newcommand{\DUT}{\mathcal{M}(\mathcal{X}^{(U)})}

\newcommand{\YT}{Y^{\text{(T)}}}
\newcommand{\YUT}{Y^{\text{(U)}}}
\newcommand{\dX}{\nabla_{X} \mathcal{L}}
\newcommand{\dY}{\nabla_{Y} \mathcal{L}}
\newcommand{\dW}{\nabla_{W_{i,j}} \mathcal{L}}
\newcommand{\dWT}{\nabla_{W_{i,j}}^{\text{(T)}} \mathcal{L}}
\newcommand{\dWUT}{\nabla_{W_{i,j}}^{\text{(U)}} \mathcal{L}}

\newcommand{\ui}[1]{\bm{u}^{(#1)}}
\newcommand{\vi}[1]{\bm{v}^{(#1)}}
\newcommand{\Xin}[1]{\overline{X}^{(#1)}}
\newcommand{\XTfl}{\overline{X}^{\text{(T)}}}
\newcommand{\XUfl}{\overline{X}^{\text{(U)}}}

\newcommand{\ci}{N}
\newcommand{\cii}{\mathcal{R}}
\newcommand{\co}{M}
\newcommand{\Hi}{h}
\newcommand{\Wi}{w}
\newcommand{\Ho}{h^{'}}
\newcommand{\Wo}{w^{'}}

\newcommand{\jj}{p}

\newcommand{\Xtgt}{\mathcal{X}_{\text{t}}}
\newcommand{\Mtgt}{\mathcal{M}_{\text{t}}}
\newcommand{\Xpub}{\mathcal{X}_{\text{p}}}
\newcommand{\Zpub}{\mathcal{Z}_{\text{p}}}

\newtheorem{theorem}{\textbf{Theorem}}
\newtheorem{definition}{\textbf{Definition}}
\newtheorem{lemma}{\textbf{Lemma}}

\DeclareMathOperator*{\argmin}{arg\,min}

\newtheorem{remark}{Remark}

\DOI{foobar}

\cclogo{\includegraphics{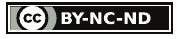}}
  
\begin{document}

\author*[1]{Yue Niu}

\author[2]{Ramy E. Ali}

\author[2]{Salman Avestimehr}

\affil[1]{Electrical and Computer Engineering Department, University of Southern California, yueniu@usc.edu}
\affil[2]{Electrical and Computer Engineering Department, University of Southern California, {reali, avestime}@usc.edu}
% \affil[3]{Electrical and Computer Engineering Department, University of Southern California, avestime@usc.edu}

  % \title{AsymML: An Asymmetric Decomposition Framework for Privacy-Preserving DNN Training and Inference}
  \title{3LegRace: Privacy-Preserving DNN Training over TEEs and GPUs}

  \runningtitle{3LegRace: Privacy-Preserving DNN Training over TEEs and GPUs}

  %\subtitle{...}

  \begin{abstract}
    {Leveraging parallel hardware (e.g. GPUs) for deep neural network (DNN) training brings high computing performance. However, it raises data privacy concerns as GPUs lack a trusted environment to protect the data. 
    Trusted execution environments (TEEs) have emerged as a promising solution to achieve privacy-preserving learning. Unfortunately, TEEs' limited computing power renders them not comparable to GPUs in performance.
    To improve the trade-off among privacy, computing performance, and model accuracy, we propose an \emph{asymmetric} model decomposition framework, \AsymML{}, to (1) accelerate training using parallel hardware; and (2) achieve a strong privacy guarantee using TEEs and differential privacy (DP) with much less accuracy compromised compared to DP-only methods. 
    By exploiting the low-rank characteristics in training data and intermediate features, \AsymML{} asymmetrically decomposes inputs and intermediate activations into low-rank and residual parts.
    With the decomposed data, the target DNN model is accordingly split into a \emph{trusted} and an \emph{untrusted} part. The trusted part performs computations on low-rank data, with low compute and memory costs. The untrusted part is fed with residuals perturbed by very small noise.  
    Privacy, computing performance, and model accuracy are well managed by respectively delegating the trusted and the untrusted part to TEEs and GPUs. 
    We  provide a formal DP guarantee that demonstrates that, for the same privacy guarantee, combining asymmetric data decomposition and DP requires much smaller noise compared to solely using DP  without decomposition. This improves the privacy-utility trade-off significantly compared to using only DP methods without decomposition.
    Furthermore, we present a rank bound analysis showing that the low-rank structure is preserved after each layer across the entire model. 
    Our extensive evaluations on DNN models show that \AsymML{} delivers $7.6\times$ speedup in training compared to the TEE-only executions while ensuring privacy.
    We also demonstrate that \AsymML{} is effective in protecting data under common attacks such as model inversion and gradient attacks.}
  \end{abstract}
  \keywords{Privacy-Preserving Machine Learning, TEE}
%  \classification[PACS]{}
 % \communicated{...}
 % \dedication{...}

  \journalname{Proceedings on Privacy Enhancing Technologies}
  \DOI{Editor to enter DOI}
  \startpage{1}
  \received{..}
  \revised{..}
  \accepted{..}

  \journalyear{..}
  \journalvolume{..}
  \journalissue{..}

\maketitle
\section{Introduction}\label{sec:intro}
Deep neural networks (DNNs) are acting as an essential building block in various applications such as computer vision (CV) \cite{VGG,ResNet} and natural language processing (NLP) \cite{BERT}. Efficiently training a DNN model usually requires a large training dataset and sufficient computing resources. 
In many real applications, datasets are locally collected and not allowed to be publicly accessible, while training is computation-intensive and hence usually offloaded to parallel hardware (e.g., GPUs). 
Considering that data transfer can be hacked or a runtime memory with sensitive data can be accessed by third parties, such practice poses serious privacy concerns.
% that the data, especially if it contains sensitive information, might be leaked to the public and misused.

\begin{figure*}[t]
    \centering
    \includegraphics[scale=0.3]{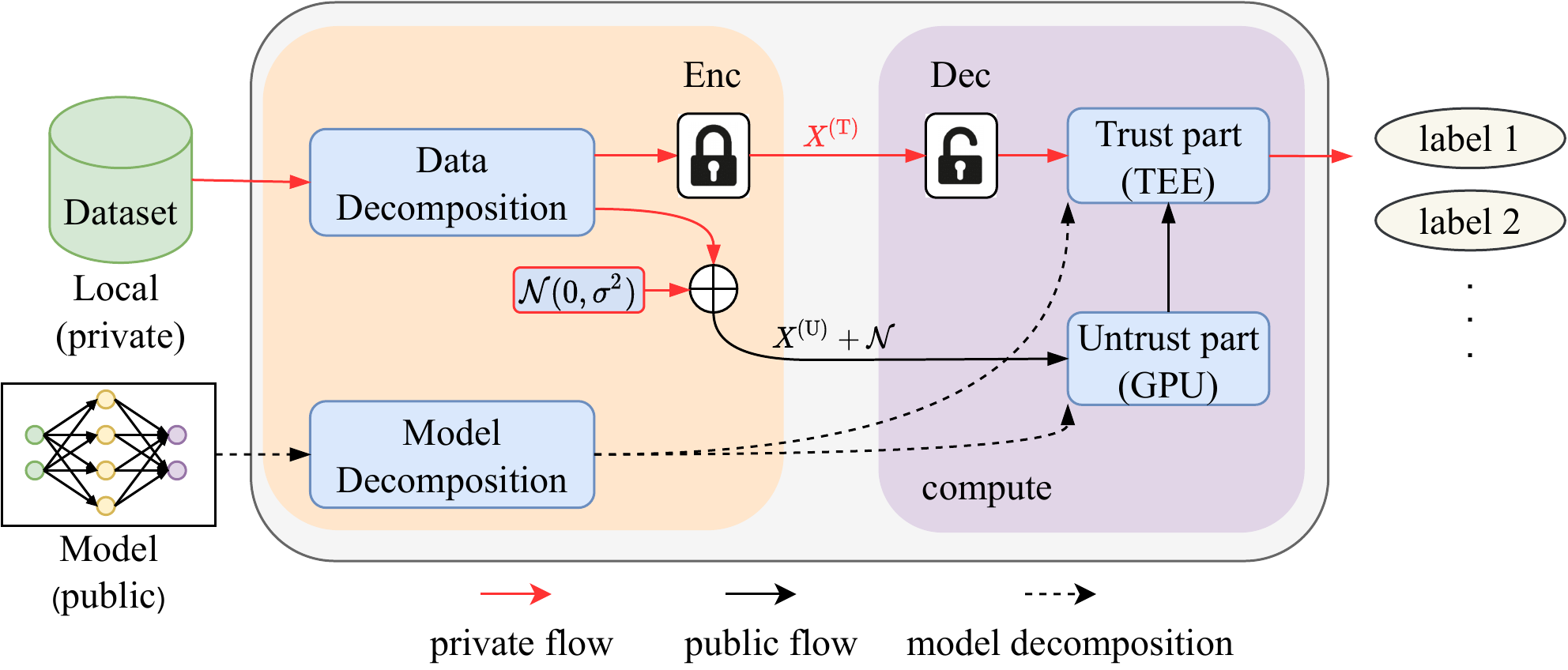}
    \caption{\footnotesize{An Overview of \AsymML{} is depicted. In \AsymML, the models are asymmetrically decomposed into trusted and untrusted parts. The trusted part performs computations on low-rank part $\XT$ at small compute/memory cost; while $\XUT$ perturbed by small noise is offloaded onto the untrusted part with little critical information involved.}}
    \label{fig:AsymML}
    \vspace{-.2cm}
\end{figure*}

The need for private data protection has motivated privacy-preserving machine learning methods such as machine learning with differential privacy (DP) \cite{DiffPrivacy, scalablePATE, DPPerGrad} and machine learning using trusted execution environments (TEEs) \cite{secureTF,MLcapsule,Slalom}. \\ DP-based methods usually defend against membership inference and model-inversion attacks \cite{Membershipattack, ModelInversion} that aim to infer or reconstruct the training data through the DNN models \cite{DiffPrivacy}. Specifically, by injecting noise to the gradients during training, DP-based methods reduce the correlation between the model parameters and the training data. Therefore, reconstructing training data through the model becomes more challenging. 
In addition to applying DP to the models,  \cite{DPData,Shredder} also apply DP directly to the input data during training. 
However, DP alone usually compromises too much accuracy to achieve strong privacy guarantees \cite{DPaccuracy}. \\
% In FL, multiple clients/silos own their private datasets and aim to collaboratively train a global model without sharing their data. 
% \red{While considering memory in untrusted hardware (e.g. GPUs) is still exposed to the public, FL lacks sufficient protections when local machines are compromised.}
%In this setting, the clients train and share their local models with a central server, which then obtains a global model by securely aggregating these local models \cite{SecureAggr}. 
% However, considering local clients are not always secure, and data in some cases are shared in the network, protection provided by these methods is relatively limited.
Unlike such DP methods, TEEs provide a direct hardware solution to protect data from any untrusted entities. TEE-based methods do not compromise accuracy as in DP. Trusted platforms such as Intel Software Guard Extensions (SGX) \cite{IntelSGX} and Arm TrustZone \cite{ArmTrust} create a sufficiently secure runtime environment, where sensitive data can be stored and processed. By performing all computations in TEEs, any untrusted access to internal memory is forbidden. 
However, the computing performance is severely affected when the TEEs are solely leveraged due to their limited storage and computation capabilities as a result of not having GPU support. 
Such executions are known as the TEE-only executions. A natural solution for this problem is to leverage both TEEs and the untrusted GPUs while protecting the privacy of the data. This idea was leveraged in \cite{Slalom} to develop a privacy-preserving framework known as Slalom for DNN inference. 
However, Slalom does not support DNN training, which is a more challenging and computationally-intensive task.

\textbf{Contributions} -- To efficiently perform private training in a heterogeneous system with \emph{trusted} TEE-enabled CPUs and fast \emph{untrusted} accelerators, we propose a new training framework, \AsymML{}, as shown in Figure~\ref{fig:AsymML}. By exploiting the potential low-rank structure in the inputs and intermediate features $X$, \AsymML{} first \emph{asymmetrically} decomposes the inputs and the features into a low-rank part $\XT$ and  a residual $\XUT$. \AsymML{} then accordingly decomposes the model into trusted and untrusted parts, in which the trusted part is fed with $\XT$, and the untrusted part is fed with $\XUT$ perturbed by a very small noise. 
When the inputs and intermediate features have a low-rank structure, the trusted part with $\XT$ incurs small computation and memory costs, while the untrusted part handles most computations with little privacy revealed.
The DNN model training is performed by respectively delegating the trusted and untrusted part to TEEs and GPUs, where TEEs protect the privacy, and GPUs guarantee the computing performance. We provide a formal DP guarantee that shows that, for the same privacy level, the asymmetric decomposition together with DP requires much smaller noise compared to solely leveraging DP methods. This implies that the decomposition improves the privacy-utility trade-off significantly. 
We also present a theoretical analysis showing that the low-rank structure is preserved after each layer in the DNN model, which ensures efficient asymmetric decomposition. 
% Furthermore, by masking the data in the untrusted part with a small noise, we also show that \AsymML{}  achieves more privacy guarantees without obvious accuracy loss.
In summary, our contributions are as follows.
\begin{enumerate}
    \item We propose a privacy-preserving training framework that decomposes the data, the intermediate features and the models into two parts which decouples the information from the computations. The decomposition is based on a lightweight singular value decomposition (SVD) algorithm along with the Gaussian DP mechanism to protect the privacy further.
    \item We provide essential theoretical analyses that show that a) \AsymML{} achieves strong DP guarantee with a very small noise added compared to the case where there is no decomposition, and b) the low-rank structure in intermediate features is well-preserved after each layer in a DNN model.
    % on the data and the intermediate features in DNN models that guarantees such an asymmetric decomposition can always be efficiently performed. 
    % \item We present a trade-off between privacy, measured based on the mutual information metric \cite{InfoTheory}, and accuracy that reveals that \AsymML{} incurs much less accuracy degradation with the same privacy budget.
    \item We demonstrate \AsymML{} is robust against common machine learning attacks, such as a strong model inversion attack that uses residual data as prior knowledge \cite{SecretRevealer}, and an attack that leverages the  gradients \cite{deepgradients}.
    % We propose a strong model inversion attack against \AsymML{} based on the model inversion attack proposed in \cite{SecretRevealer}, which uses the untrusted part of data as prior knowledge to reconstruct the raw training data. We demonstrate that \AsymML{} can still protect the privacy of the data under such a strong attack model.
    \item We implement \AsymML{} in a heterogeneous system with TEE-enabled CPUs and fast GPUs that supports various models. \sloppy Our extensive experiments show that AsymML achieves up to $7.6\times$ training speedup in VGG and $5.8\times$ speedup in ResNet variants compared to the TEE-only executions.
    % during inference, AsymML delivers $7.7-11.2\times$ speedup in VGG-16/VGG-19, $5.8-7.2\times$ speedup in ResNet-18/ResNet-34; during training, AsymML achieves up to $7.6\times$ speedup in VGG variants and $5.8\times$ speedup in ResNet variants compared to the TEE-only executions.
\end{enumerate}

\textbf{Organization.} The rest of this paper is organized as follows. In Section \ref{sec:related}, we provide a brief overview of the closely-related works. We then present \AsymML{} and its overhead analysis in Section \ref{sec:asymml}. Our theoretical analyses on privacy and low-rank structure are provided in Section \ref{sec:rankbound}. In Section \ref{sec:attack}, we present several attack models.
In Section \ref{sec:exp}, we provide extensive experiments to evaluate the performance, the overheads and the privacy guarantees of \AsymML. Section \ref{sec:discussion} discusses the potential applications and limitations of \AsymML. Finally, in Section \ref{sec:conclusion}, we discuss some concluding remarks.

\section{Related Works} \label{sec:related}
Ensuring data privacy is  a critical issue in offloading machine learning tasks to distributed and cloud-based systems. Many solutions have been developed to address such an issue, which falls into three main categories: differential privacy \cite{DiffPrivacy}, crypto-based methods \cite{MLconfidential}, and hardware solutions such as TEEs \cite{IntelSGX}.
%\begin{table*}[!htb]
%\caption{Comparison among different privacy-preserving methods: differential privacy (DP), Crypto-based solutions (Crypto) and hardware solutions (TEE).}
%\label{tab:methods_compare}
%\centering
%\begin{tabular}{c|cccccc}
%\toprule
% & Privacy & Accuracy & Computing efficiency & Communication efficiency  & Training & Inference\\
%\midrule
% DP & medium & low & high & high & yes & no \\
% Crypto & high & high & very low & low & yes & yes  \\
% TEE & high & high & low & low & yes & yes \\
%\bottomrule
%\end{tabular}
%\end{table*}

\textbf{Differential privacy}. Model training using differential privacy (DP) \cite{DiffPrivacy, scalablePATE, DPPerGrad} aims to defend against membership inference \cite{Membershipattack} and model inversion attacks \cite{ModelInversion}. A well-known DNN training algorithm with DP is proposed in \cite{DiffPrivacy}. It follows a typical DP procedure and approximates the training objective function using a Gaussian noise mechanism.  
By injecting noise into the gradients during the training, DP reduces correlation between model parameters and training datasets to some extent. 
Therefore, with such a ``noisy” model, it becomes more challenging to predict if a data record exists in the training dataset (membership inference attacks); or to directly reconstruct the training dataset through the trained model (model inversion attacks). 
However, DP usually greatly compromises accuracy to achieve strong privacy guarantees \cite{DPaccuracy}.  
In addition to applying DP to models, \cite{DPData,Shredder} directly add random noise to the input data, therefore hiding plain data from untrusted parties. However, the accuracy of these methods degrades significantly as we require stronger privacy guarantees. 

\textbf{Crypto-based methods}. Machine learning with encrypted data has been recently investigated using various techniques such as homomorphic encryption (HE) \cite{privMLHEE,CareNet,CryptoNets}, coded computing \cite{privateML} and multi-party computing (MPC) \cite{CrypTen, MPCDL, Goten}. These methods first encrypt the inputs, and then perform the computations in the encrypted domain. 
% As long as data is considered safe in the encrypted domain and untrusted parties do not know the encryption key, the training tasks can be always offloaded to distributed or cloud systems without compromising privacy.  
Despite their effectiveness in preserving privacy, these techniques still face many challenges. 
For example, machine learning with HE incurs much more computation costs compared to normal training flow using plain data. As a result, it is usually impracticable for most current DNN training or inference. 
On the other hand, coded computing \cite{privateML} does not introduce noticeable compute costs, but requires a strong condition that a certain number of compute nodes cannot collude in a distributed system. Such an assumption may not hold in practice. Similar condition is also required in MPC settings. Moreover, current privacy-preserving coded computing approaches are limited to simple machine learning models such as logistic regression \cite{privateML,tang2021verifiable}. 
% Privacy-preserving NN training still remains to be explored.

\textbf{Hardware solutions}. In addition to algorithmic designs, recent works such as \cite{Tensorscone, secureTF, MLcapsule, Citadel, Plinius} have proposed privacy-preserving machine learning methods by leveraging trusted execution environments (TEEs) such as Intel SGX \cite{IntelSGX} and Arm TrustedZone \cite{ArmTrust}. These solutions keep private data in a trusted runtime environment that forbids any untrusted access and then perform  training and inference inside this environment. 
Therefore, they usually offer strong privacy guarantees.
However, TEE-based solutions usually degrade the computing performance due to the lack of GPU support, therefore causing significant training and inference slowdowns.  
To speed up computing, Graviton \cite{Graviton} prototypes a TEE fabric in a GPU platform so that both privacy and computing performance are achieved. On the other hand, \cite{Slalom, DarKnight, Goten} propose solutions that leverage both TEE-enabled CPUs and untrusted GPUs to perform inference. In Slalom \cite{Slalom}, for instance, the inputs to a convolution   layer are first masked with noise in TEEs to ensure privacy and then are sent to GPUs. When the convolutions are done, the noisy outputs are transferred back to TEEs and unmasked using pre-computed results. However, Slalom only supports inference. To further support training in TEEs, methods such as DaKnight \cite{DarKnight}, and Goten \cite{Goten} combine TEE with MPC techniques and require non-colluding compute nodes. As a result, they still fail to protect privacy if such an assumption does not hold.

% In Table \ref{tab:methods_compare}, we compare the above three methods in terms of accuracy, privacy guarantees, computing efficiency, communication efficiency, training, and inference support. 
% For example, DP-based solutions achieve certain privacy guarantees with much lower accuracy. 
%In addition, it is mainly used in training stages. 
% Current encryption-based and hardware solutions achieve strong privacy guarantees and higher accuracy at costs of low computing and communication efficiency. Such high costs hinder their deployment to more computation-intensive training jobs. 

% \begin{figure*}[ht!]
% \centering
% \begin{subfigure}{.32\linewidth}
%    \centering
%    \includegraphics[width=.9\linewidth]{figure/polarplot_dp.pdf}
%    \caption{Differential privacy based methods}
%    \label{fig:dp}
%\end{subfigure}
%\hfill
%\begin{subfigure}{.32\linewidth}
%    \centering
%    \includegraphics[width=.9\linewidth]{figure/polarplot_enc.pdf}
%    \caption{Encryption based methods}
%    \label{fig:enc}
%\end{subfigure}
%\begin{subfigure}{.32\linewidth}
%    \centering
%    \includegraphics[width=.9\linewidth]{figure/polarplot_hardware.pdf}
%    \caption{Hardware based methods}
%    \label{fig:hardware}
%\end{subfigure}
%\caption{The comparison between the various privacy-preserving methods is depicted. Different methods make different trade-offs among the following aspects: accuracy (Acc), privacy guarantees (Priv), compute efficiency (Comp), communication efficiency (Comm), and training/inference support (Mode).}
%\label{fig:polar}
%\end{figure*}

In addition to the single-user offloading setting considered in our work, federated learning (FL) considers a collaborative learning setting with multiple clients and a central server aiming to learn a global model without data sharing \cite{FederatedLearning}. 
In FL, each client trains a model using the local dataset, while after a certain number of local iterations, participating clients send their local models to a central parameter server. 
By aggregating local models from clients using algorithms such as FedAvg \cite{FedAvg}, the server obtains an updated global model and then broadcasts to clients again in the next round. 
Such a procedure usually repeats many times until convergence is reached.
% In this way, each client uses the learned knowledge from others to improve accuracy and generalization performance. 
During training with FL, sensitive datasets collected by each client are always stored in local memory, while the remote server can only access model parameters. 
Therefore, the datasets are protected as long as the local system is trusted. 
However, FL still faces many challenges in practice. 
For example, learning can be very difficult if data are highly heterogeneous \cite{FedNonIID}. 
Furthermore, FL cannot protect data against model inversion attacks. 
Therefore, FL still needs to be combined with methods such as DP \cite{FedDP} or secure aggregation \cite{SecureAggr} to ensure stronger data protection.

\section{AsymML} \label{sec:asymml}
In this section, we present \AsymML. We start with the threat model considered in our work and then describe \AsymML{} in detail, including data and model decomposition in DNNs through a lightweight SVD approximation. Finally, we provide the computation and the memory costs  of \AsymML.

\textbf{Notations} -- We use lowercase letters for scalars and vectors, and uppercase letters for matrices and tensors. $\overline{X}$ denotes a 2D matrix flattened from a multidimensional tensor $X$, while $\left\| \overline{X} \right\|_F$ denotes the Frobenius norm of the matrix $\overline{X}$. $\overline{X}^{*}$ denotes the transpose of $\overline{X}$. $X_i$ denotes $i$-th slice of a tensor $X\in \mathbb{R}^{N \times \Hi \times \Wi}$, $X_{i,:,:}$, while $X_{i,j}$ denotes $(i,j)$-th slice, $X_{i,j,:}$. We use $\circledast$ to denote convolution, and $\cdot$ for matrix multiplication. For a DNN model, given a loss function $\mathcal{L}$, $\nabla_W\mathcal{L}$ denotes gradients of the loss w.r.t parameters $W$. Finally, we use $\log$ to denote the logarithm to the base $2$. 
% Finally, $\mathcal{I}(X; Y)$ denotes the mutual information  between $X$ and $Y$ \cite{InfoTheory}.

\subsection{Threat Model}
Based on the capabilities of common TEE platforms such as Intel SGX \cite{IntelSGX}, we consider the following threat model. 1) An adversary may compromise the OS where the TEE is running. However, it cannot breach the TEE environment, 2) the adversary may access hardware disk, runtime stack, memory outside TEEs and communication between trusted and untrusted environments, 3) the adversary may obtain the model parameters and the gradients during training and then analyze underlying relations with the training data, 4) the adversary may obtain data in untrusted environments and infer information in training data, and 5) the adversary may gain some knowledge of the training dataset (e.g., labels), and use public/online resources to improve its attack performance.

\noindent However, we do not consider side-channel attacks that compromise TEEs by probing physical signals such as power consumption and electromagnetic leaks.

\subsection{Asymmetric Data and Model Decomposition using SVD}\label{sec:AsymMLSVD}
\AsymML{} decomposes compute-intensive and memory-intensive modules, especially convolutional layers in modern DNNs \cite{VGG, ResNet}, and then assigns each part to a suitable platform. At a high level, \AsymML{} starts with decomposing a convolutional layer such that the computation involving privacy-sensitive information is performed in TEEs while the residual part is offloaded to GPUs. 
Specifically, the input of a convolutional layer denoted by $X$ is decomposed into a low-rank part $\XT$ and a residual part $\XUT$ as shown in Fig.~\ref{fig:AsymML_Decompose}. 
The residual part $\XUT$ is then protected by adding a small noise as we discuss in Section \ref{sec:rankbound} to ensure privacy. \\ 
When the convolutions in GPUs and TEEs are completed, outputs from GPUs denoted by $\YUT$ are merged into TEEs denoted by $\YT$, and then followed by a non-linear layer (a pooling layer might be also needed). 
Outputs after a non-linear layer will be then re-decomposed before proceeding to the next convolution layer. 
During the whole forward/backward pass, the low-rank part in TEEs is never exposed, which effectively prevents crucial parts of data from being leaked. 
\begin{figure*}
    \centering
    \includegraphics[width=0.6\linewidth]{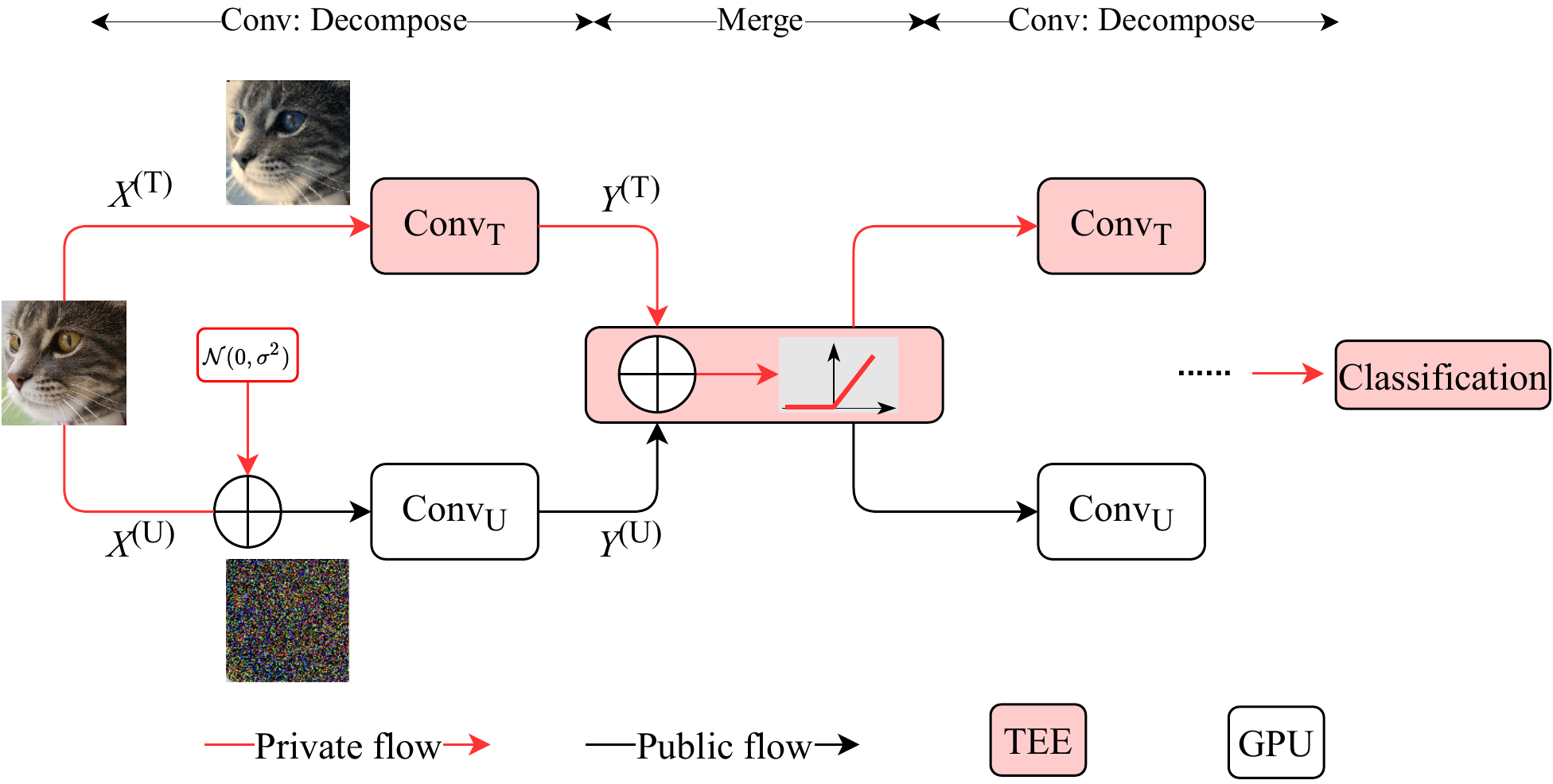}
    \caption{\footnotesize{An illustration of the model decomposition in \AsymML{} is depicted. First, the input of a convolutional layer $X$ is decomposed into a low-rank part $\XT$ and a residual part $\XUT$ that is protected by Gaussian noise. The convolution of the trusted and the untrusted parts are then performed in TEEs and GPUs, respectively. Finally, the results are combined in TEEs followed by a non-linear layer and then the decomposition is performed again and so on. Therefore, training using \AsymML{} behaves like a ``three-legged race".}}
    \vspace{-.4cm}
    \label{fig:AsymML_Decompose}
\end{figure*}

We now explain the decomposition in detail. For a {convolutional layer} with input $X\in \mathbb{R}^{N \times \Hi \times \Wi}$ and kernels $W\in \mathbb{R}^{\co\times\ci\times k\times k}$, where $N$ and $M$ are the number of input and output channels, and $h,w$ and $k$ are the size of inputs and kernels respectively, the $i$-th output channel is computed as follows\footnote{The batch size and the bias are omitted here for simplicity.}
\begin{align}
Y_{i} = \conv(X, W_{i}) = \sum_{j=1}^{N} X_{j} \circledast W_{i,j}. 
\end{align}
By splitting the input $X$ into $\XT$ and $\XUT$, \AsymML{} decomposes the convolution into a trusted convolution $\convtrust$ and an untrusted convolution $\convuntrust$ as follows
\begin{equation}
    \label{eq:conv_split}
    Y_{i} = \convtrust(\XT, W_{i})+\convuntrust(\XUT, W_{i}).
\end{equation}

\noindent The \emph{asymmetric} decomposition is inspired by an observation that channels in inputs and intermediate activations are usually highly correlated. By exploiting such channel correlations, $X$ can be decomposed in a way that a low-rank tensor $\XT$ keeps most information, while $\XUT$ stores the residuals. Therefore, the complexity of $\convtrust$ is significantly reduced with little privacy compromised. \\ 
To extract a low-rank tensor $\XT$, we first apply singular value decomposition (SVD) to $\overline{X} \in \mathbb{R}^{N \times \Hi\Wi}$ flattened from $X$ as
\begin{equation}
    \label{eq:X_svd}
    \overline{X} = U \cdot \text{diag}(s) \cdot V^{*},
\end{equation}
where the \emph{principal} channels are stored in $V$ while the corresponding singular values in $s$ denote the importance of each principal channel. The $j$-th channel in $\XT$ is obtained from the first $\cii$ most principal channels as follows 
\begin{equation}
    \label{eq:XT}
    \XT_{j} = \sum_{\jj=1}^{\cii}s_{\jj} \cdot U(j,\jj) \cdot X_{\jj}^{'} \equiv \sum_{\jj=1}^{\cii}a_{j,\jj} \cdot X_{\jj}^{'},
\end{equation}
where $X_{\jj}^{'} \in \mathbb{R}^{\Hi\times\Wi}$ is a 2D matrix reshaped from the $\jj$-th column of $V$ and $a_{j,\jj}=s_{\jj} \cdot U(j,\jj)$.
On the other hand, the  residual part $\XUT_j$ is given by
\begin{align}
\XUT_j = X_j - \XT_j = \sum_{\jj=\cii+1}^{N}a_{j,\jj} \cdot X_{\jj}^{'}.
\end{align}

\noindent To further strengthen the privacy guarantee, $\XUT$ is perturbed using a very small Gaussian noise as 
\begin{align}
    \mathcal{M}(\XUT) = \XUT + Z
\end{align}
before being sent to the GPUs, where $Z_{i, j}$ are independent $\mathcal{N}(0,\sigma^2)$ random variables. 

\noindent With the low-rank input $\XT$, the forward and backward passes of $\convtrust$ can be reformulated in a way such that the complexity depends on the number of principal channels $\cii$ as follows. 

\begin{itemize}
    \item 
 \textbf{Forward.} During a forward pass, the outputs in TEEs are calculated as follows 
\begin{equation}
    \label{eq:YT}
    \begin{split}
    \YT_{i} &= \sum_{j=1}^{\ci}\XT_{j} \circledast W_{i,j} =\sum_{j=1}^{\ci}\sum_{\jj=1}^{\cii}a_{j,\jj}X^{'}_{\jj} \circledast W_{i,j}\\
    &=\sum_{\jj=1}^{\cii}X^{'}_{\jj} \circledast \sum_{j=1}^{\ci}a_{j,\jj}W_{i,j} \equiv \sum_{\jj=1}^{\cii}X^{'}_{\jj} \circledast W_{i,\jj}^{'},
\end{split}
\end{equation}
where $W_{i,\jj}^{'}=\sum_{j=1}^N a_{j,\jj}W_{i,j}$.
According to Eq.~(\ref{eq:YT}), $\convtrust$ essentially is a convolution operation with $\cii$ input channels, and the kernels $W^{'}$ that are obtained by regrouping $W$. 

\item \textbf{Backward.} During a backward pass, given gradient $\nabla_Y\mathcal{L}$, $\dWT$ in TEEs is computed as
\begin{equation}
    \label{eq:dW}
    \begin{split}
        \dWT&=\XT_{j}\circledast\nabla_{Y_i}\mathcal{L} =\sum_{\jj=1}^{\cii} a_{j, \jj}\cdot X^{'}_{\jj}\circledast\nabla_{Y_i}\mathcal{L},
    \end{split}
\end{equation}
where $X^{'}_{\jj}\circledast\nabla_{Y_i}\mathcal{L}$ for $\jj=1,2,\cdots,\cii$ are first computed. $\dWT$ is obtained by a simple linear transformation with $a_{j,\jj}$. Hence, the computation complexity of backward passes also depends on $\cii$. \\
On the other hand, the untrusted forward output $\YUT_i$ and the backward gradient $\dWUT$  in GPUs are the same as in the classical convolutional layers but with a different input $\mathcal{M}(\XUT)$. The final result $Y_i$ and $\dW$ are obtained by simply adding the results of the TEEs and the GPUs. 
In addition, as computing $\dX$ does not involve the input $X$, it is offloaded onto GPUs.
\end{itemize}
\subsection{Lightweight SVD Approximation}
\label{subsec:lightweight}
Performing exact SVD on $X\in \mathbb{R}^{N \times \Hi \times \Wi}$ incurs a significant complexity of $O(N\Hi^2\Wi^2+N^2\Hi\Wi)$, which goes against our objective of reducing the complexity in TEEs. 
To reduce the complexity, we propose a lightweight SVD approximation that reduces complexity to only $O(\cii N\Hi\Wi)$.

SVD essentially is an algorithm that finds two vectors $\ui{i}$ and $\vi{i}$ to minimize $\left \| \Xin{i-1}-\ui{i}\cdot{\vi{i}}^* \right \|_F$, where $\Xin{i-1}$ is the remaining $\overline{X}$ with $i-1$ most principal components extracted.
$\left \{ \ui{i} \right \}_{i=1}^{\Hi\Wi}$ and $\left \{ \vi{i} \right \}_{i=1}^{N}$ are both orthogonal set of vectors. \AsymML{}, however, does not require such orthogonality. Therefore, SVD can be then relaxed as  
\begin{equation}
\label{eq:X_svd_light}
    \begin{aligned}
        & \XTfl = \argmin_{\XTfl} \quad \left \| \overline{X} - \XTfl \right \|_F^2, \\ 
        & \textrm{s.t.} \ \text{rank} \left(\XTfl \right)\leq \cii,\quad \XUfl = \overline{X} - \XTfl.
    \end{aligned}
\end{equation}

\noindent Using alternating optimization \cite{AlterOpt}, each component $i$ in $\XTfl$ can be obtained as described in Algorithm~\ref{alg::lightSVD}\footnote{In the actual implementation, $\XTfl$ is stored as a list of vectors $\left\{ \ui{i}, \vi{i} \quad | \quad i=1,\cdots,\cii \right \}$, rather than as a matrix.}. 
Due to the fast convergence of alternating optimization, given suitable initial values (e.g. output channels from the previous ReLU/Pooling layer), we have experimentally observed that the maximum number of iterations $\mathrm{max\_iter}$ to reach a near-optimal solution $\left \{ \ui{i} \right \}_{i=1}^{\cii}$ and $\left \{ \vi{i} \right \}_{i=1}^{\cii}$ is typically $1\sim 2$ (See Appendix \ref{appx:approxsvd}). 
Finally, it is worth noting that the computation complexity of this algorithm is much less than exact SVD as it only increases linearly with $\cii$.

\iffalse
\begin{algorithm}
\caption{Lightweight SVD Approximation}
\label{alg::lightSVD}
\begin{algorithmic}
\renewcommand{\algorithmicrequire}{\textbf{Input:}}
\renewcommand{\algorithmicensure}{\textbf{Output:}}
\REQUIRE $\cii, \overline{X}, \left\{ \ui{i}_0, \vi{i}_0 \quad | \quad i=1,\cdots,\cii \right \}, \mathrm{max\_iter}$ 
\ENSURE $\XTfl, \XUfl$
\STATE Initialize $\XTfl$ as $0$
\FOR{$i$ \textbf{in} $1,\cdots,\cii$}
    \FOR{$j$ \textbf{in} $1,\cdots, \mathrm{max\_iter}$}
        \STATE $\ui{i}_{j} = \frac{\overline{X}\cdot \vi{i}_{j-1}}{\left \| \vi{i}_{j-1}\right \|_F^2}$
        {\color{blue}\COMMENT{Alternating optimization}}
        \STATE $\vi{i}_{j} = \frac{\overline{X}^*\cdot \ui{i}_{j}}{\left \| \ui{i}_{j}\right \|_F^2}$
    \ENDFOR
    \STATE $\XTfl=\XTfl + \ui{i}_j \cdot {\vi{i}_j}^*$
    \STATE $\overline{X} = \overline{X} - \ui{i}_j\cdot{\vi{i}_j}^*$
\ENDFOR
\STATE $\XUfl = \overline{X}$
\end{algorithmic}
\end{algorithm}
\fi

\begin{algorithm}
    \SetAlgoLined
    \KwData{$\cii, \overline{X}, \left\{ \ui{i}_0, \vi{i}_0 \quad | \quad i=1,\cdots,\cii \right \}, \mathrm{max\_iter}$}
    \KwResult{$\XTfl, \XUfl$}
    Initialize $\XTfl$ as $0$\;
    \For{$i$ \textbf{in} $1,\cdots,\cii$}{
        \For{$j$ \textbf{in} $1,\cdots, \mathrm{max\_iter}$}{
            \tcc{Alternating optimization}
            $\ui{i}_{j} = \frac{\overline{X}\cdot \vi{i}_{j-1}}{\left \| \vi{i}_{j-1}\right \|_F^2}$\;
            $\vi{i}_{j} = \frac{\overline{X}^*\cdot \ui{i}_{j}}{\left \| \ui{i}_{j}\right \|_F^2}$\;
        }
    $\XTfl=\XTfl + \ui{i}_j \cdot {\vi{i}_j}^*$\;
    $\overline{X} = \overline{X} - \ui{i}_j\cdot{\vi{i}_j}^*$\;
    }
    $\XUfl = \overline{X}$\;
\caption{Lightweight SVD Approx.}
\label{alg::lightSVD}
\end{algorithm}

%\begin{remark} \normalfont 
%It should be noted that the approximated SVD algorithm, Algorithm~\ref{alg::lightSVD}, does not affect the training or the test accuracy as the sum of $\XT$ and $\XUT$ always equals $X$. Hence, the sum of the convolution outputs $\YT+\YUT$ also equals the original output $Y$, as given in Equation (\ref{eq:conv_split}). In fact, the approximated SVD algorithm only affects the information distribution in $\XT$ and $\XUT$, which is analyzed in Section~\ref{subsec:attackperf}.
%\end{remark}

\subsection{Overhead Analysis}
The computation and memory costs in TEEs are of great concern as they decide \AsymML{}'s performance in a heterogeneous system. In this section, we break down these costs and compare the costs in TEEs and GPUs.

\begin{figure*}
    \centering
    \begin{subfigure}{0.45\textwidth}
        \centering
        \includegraphics[width=0.92\linewidth]{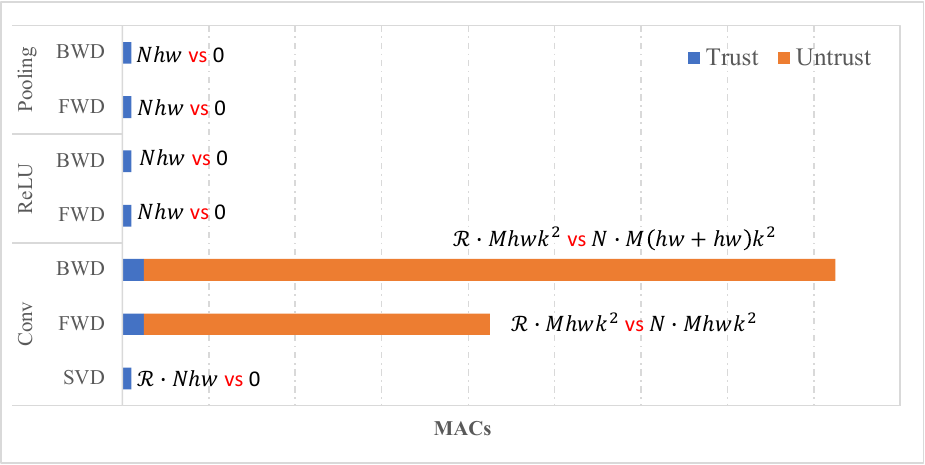}
        \caption{Computation in TEEs and GPUs}
        \label{fig:complexity_comp}
    \end{subfigure}%
    \begin{subfigure}{0.45\textwidth}
        \centering
        \includegraphics[width=0.95\linewidth]{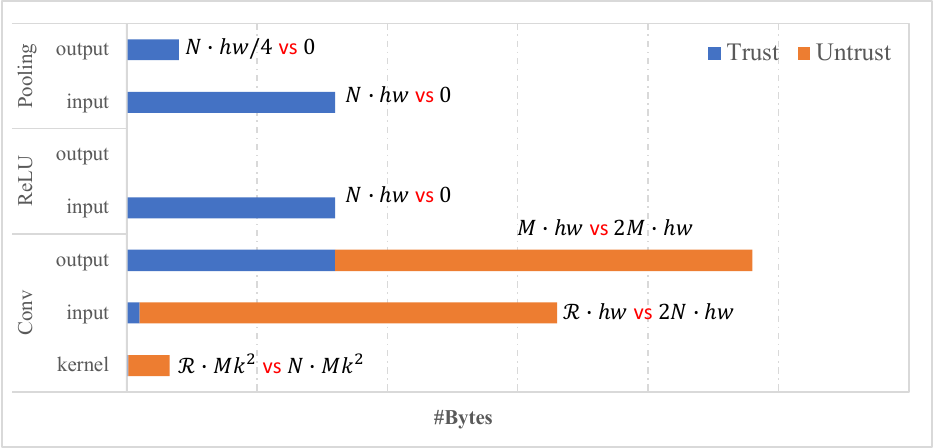}
        \caption{Memory in TEEs and GPUs}
        \label{fig:complexity_mem}
    \end{subfigure}
    \caption{\footnotesize{The computation and the memory costs in TEEs (Trusted) and GPUs (Untrusted) ($\frac{\cii}{\ci}=\frac{1}{16}$, pooling kernel size is $2$)}}
    \vspace{-.2cm}
    \label{fig:complexity}
\end{figure*}

Figure~\ref{fig:complexity} shows computation and memory costs in TEEs and GPUs for the case where $\cii/N=1/16$. 
As described in Section \ref{sec:AsymMLSVD}, the computation complexity of $\convtrust$ in TEEs increases with the number of principal channels $\cii$, while the complexity of $\convuntrust$ in GPUs is the same as that of the original convolutional layer. 
In addition to $\convtrust$, all element-wise operations (ReLU, Pooling, etc) and the lightweight SVD are performed in TEEs. 
As for the memory cost, the inputs and outputs of all element-wise operations are stored in TEEs, together with the inputs $\XT$ and the outputs of the convolution $\convtrust$,  $\YT$. 
On the other hand, inputs $\mathcal{M}(\XUT)$, outputs $\YUT$ of $\convuntrust$ and the corresponding gradients $\dX$ and $\dY$ are stored in GPUs.
% For backward passes, the computation complexity is similar as in forward passes except that computing gradients on inputs is fully offload to GPUs. 
% As for the memory cost, 

\section{Theoretical Guarantees}\label{sec:rankbound}
In this section, we first analyze the low-rank structure of the intermediate activations in NNs, and show that such a low-rank structure is preserved after linear operators such as convolutions in convolutional neural networks (CNNs).
Then, we characterize the differential privacy guarantee of \AsymML{} when the input $\XUT$ is perturbed by a small Gaussian noise $Z$. We further demonstrate that \AsymML{} requires much smaller noise for the privacy budget compared to directly adding noise to original inputs as in \cite{Shredder}.

% The efficiency of \AsymML{} hinges on the observation that the data and the intermediate features have a low-rank structure. Thus, it is necessary to ensure that the low-rank structure is still preserved through all DNN modules especially for convolutional layers. 
% In this section, we first define a metric termed as \emph{SVD-channel entropy} to formally quantify such a low-rank structure. Then, we show how the SVD-channel entropy changes in a DNN model after each layer. This rank analysis serves as a guidance for \AsymML{} to decompose the computation of each layer, however we formally measure the privacy guarantees of \AsymML{} in Section \ref{subsec:attackperf} through the mutual information metric\cite{InfoTheory} along with other metrics.

\subsection{Low-Rank Structure in NNs}
\label{subsec:rankbound}
We first define a metric termed as \emph{SVD-channel entropy} to formally quantify the low-rank structure in the intermediate features. Then, we show how the SVD-channel entropy changes in a DNN model after each layer. Inspired by SVD entropy \cite{SVDEntropy} and Rényi entropy \cite{RenyiEntropy}, we define SVD-channel entropy based on singular values obtained in  (\ref{eq:X_svd}). 
Given the singular values $\left \{ s_{\jj} (X) | \ \jj=1,\cdots, \ci\right \}$ of an input matrix $X$ with $N$ channels, the SVD-channel entropy of $X$, denoted by $\mu_X$, is defined as follows.
\begin{definition}(SVD-Channel Entropy).
\label{def:CEntropy}
The SVD-channel entropy of a matrix $X$ is given by
\begin{equation}
\mu_X = -\log \left(\sum\limits_{j=1}^{\ci} \bar{s}_j^2(X) \right),
\end{equation}
where $\bar{s}_j(X)=\frac{s_j(X)}{\sum\limits_{\jj=1}^{\ci}s_{\jj}(X)}$ is the $j$-th normalized singular value.
\end{definition}

\noindent Next, we show in Lemma~\ref{lemma:CEntropy} and Theorem \ref{theorem::CEntropy} that $\mu_X$ defined above indicates the number of \emph{principal} channels actually needed to approximately reconstruct the original data $X$.
\begin{lemma}
\label{lemma:CEntropy}
The SVD-channel entropy of an input $X$ with $N$ input channels is bounded as $0\leq \mu_X \leq \log \ci$. 
\end{lemma}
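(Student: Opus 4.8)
The plan is to reduce the claimed inequality on $\mu_X$ to an elementary two-sided bound on the quantity $P_X := \sum_{j=1}^{\ci}\bar s_j^2(X)$, and then apply monotonicity of $-\log$. Since the singular values $s_j(X)$ are nonnegative, the normalized values satisfy $\bar s_j(X)\geq 0$ and $\sum_{j=1}^{\ci}\bar s_j(X)=1$ by construction; hence $(\bar s_1,\dots,\bar s_{\ci})$ lies on the probability simplex, and the whole statement becomes: for any probability vector of length $\ci$, one has $1/\ci \leq P_X \leq 1$, which transfers to $0\leq -\log P_X \leq \log \ci$.

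For the upper bound on $\mu_X$ (equivalently the lower bound $P_X\geq 1/\ci$), I would invoke Cauchy--Schwarz, or equivalently the power-mean / QM--AM inequality: $\sum_j \bar s_j^2 \geq \frac{1}{\ci}\left(\sum_j \bar s_j\right)^2 = \frac{1}{\ci}$. Applying $-\log(\cdot)$, which is decreasing, gives $\mu_X = -\log P_X \leq -\log(1/\ci) = \log \ci$. For the lower bound $\mu_X\geq 0$ (equivalently $P_X\leq 1$), I would use that each $\bar s_j\in[0,1]$ (since $\bar s_j \leq \sum_{\jj}\bar s_{\jj}=1$), so $\bar s_j^2\leq \bar s_j$ and therefore $P_X = \sum_j \bar s_j^2 \leq \sum_j \bar s_j = 1$; then $\mu_X = -\log P_X \geq -\log 1 = 0$.

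Neither direction presents a real obstacle; the only thing to be slightly careful about is the degenerate case $X=0$, where all $s_j=0$ and the normalization is ill-defined, which I would simply exclude (or handle by convention, e.g. setting $\mu_X=\log\ci$, consistent with the limit). If desired, I would also record the equality conditions, since they motivate the next theorem: $\mu_X=0$ exactly when a single singular value is nonzero (i.e.\ $\overline X$ has rank one), and $\mu_X=\log\ci$ exactly when all singular values are equal, so that $\mu_X$ indeed behaves as an ``effective number of principal channels'' on the log scale.
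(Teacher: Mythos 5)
Your proof is correct and follows essentially the same route as the paper: your two-sided bound $\frac{1}{\ci}\leq\sum_j \bar s_j^2\leq 1$ (via Cauchy--Schwarz and $\bar s_j^2\leq\bar s_j$) is exactly the paper's inequality $\sqrt{P}\leq Q\leq\sqrt{\ci}\cdot\sqrt{P}$ restated in normalized form. The remarks on the degenerate case $X=0$ and the equality conditions are a nice addition but not a different argument.
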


\noindent If we use the first $\ceiling{2^{\mu_X}}$ most principal channels to reconstruct $X$, then Theorem~\ref{theorem::CEntropy} shows that such a reconstruction is sufficient to approximate $X$.
\begin{theorem}
\label{theorem::CEntropy}
Given a matrix $X$ with SVD-channel entropy $\mu_X$, and assuming the $j$-th singular value is given as $s_j(X)=a\cdot b^{j-1}$, for some constants $a>0$, $0<b<1$,
if we use the $\cii=\ceiling{2^{\mu_X}}$ most principal channels to reconstruct $X$, then we have $\frac{\sum_{j=1}^{\cii}s_j^2(X)}{\sum_{j=1}^{\ci}s_j^2(X)} \geq 0.97$.
\end{theorem}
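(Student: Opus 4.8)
The plan is to transfer the claim to the probability simplex and reduce it to a bound on the omitted ``squared mass.'' Put $q_j := \bar{s}_j(X)$, so that $q_j \ge 0$ and $\sum_{j=1}^{\ci} q_j = 1$, and, reading off Definition~\ref{def:CEntropy}, $Q := \sum_{j=1}^{\ci} q_j^2 = 2^{-\mu_X}$. By Lemma~\ref{lemma:CEntropy}, $1 \le 2^{\mu_X} \le \ci$, so $\cii = \ceiling{2^{\mu_X}}$ is a genuine index in $\{1,\dots,\ci\}$; crucially, $\cii \ge 2^{\mu_X} = 1/Q$, i.e. $Q \ge 1/\cii$. We may order the singular values nonincreasingly, $q_1 \ge q_2 \ge \dots \ge q_{\ci}$, and assume $q_{\cii} > 0$ (otherwise $X$ has rank below $\cii$, all omitted singular values vanish, and the ratio is exactly $1$). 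Since the target ratio equals $\sum_{j\le\cii} q_j^2 / Q$, it suffices to prove the tail bound $\sum_{j>\cii} q_j^2 \le 0.03\,Q$.

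For that I would use only monotonicity of $(q_j)$ together with $Q \ge 1/\cii$. Every omitted coordinate is at most $q_{\cii}$, so
\begin{equation*}
\sum_{j>\cii} q_j^2 \;\le\; q_{\cii}\sum_{j>\cii} q_j \;=\; q_{\cii}\Bigl(1 - \sum_{j\le\cii} q_j\Bigr) \;\le\; q_{\cii}\bigl(1 - \cii\, q_{\cii}\bigr),
\end{equation*}
where the last step uses that each of the first $\cii$ coordinates is at least $q_{\cii}$. The map $x \mapsto x(1-\cii x)$ on $[0,1/\cii]$ is maximized at $x = 1/(2\cii)$ with value $1/(4\cii)$, so $\sum_{j>\cii} q_j^2 \le 1/(4\cii) \le Q/4$; hence the retained fraction is at least $3/4$ with no further hypotheses.

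Closing the gap between this $3/4$ and the stated constant is the delicate part and the step I expect to be the main obstacle: the estimate above is essentially tight for the spectrum ``one moderate singular value followed by a long flat tail'' calibrated so that $q_{\cii}\approx 1/(2\cii)$ and $1/Q\approx\cii$, so one must exploit the information the crude bound discards — that $1/Q$ is close to the integer $\cii$, not merely $\le\cii$, and that the flat-tail entries are themselves bounded by $q_{\cii}$ consistently with the value of $Q$. A clean way to organize this is to phrase the worst case as the finite-dimensional program $\max \sum_{j>\cii} q_j^2$ over nonincreasing $q$ with $\sum q_j = 1$, $\sum q_j^2 = Q$, and $\cii = \ceiling{1/Q}$, argue by a convexity/majorization argument that the maximizer is a two-level vector (one dominant entry plus a flat block), and then optimize the resulting one-parameter rational expression to read off the constant.
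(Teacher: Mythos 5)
Your rigorous content is correct: with $q_j=\bar s_j$, $Q=\sum_j q_j^2=2^{-\mu_X}$ and $\cii=\ceiling{1/Q}\ge 1/Q$, the chain $\sum_{j>\cii}q_j^2\le q_{\cii}(1-\cii q_{\cii})\le 1/(4\cii)\le Q/4$ is valid and gives a retained fraction of at least $3/4$. The gap is the step you yourself flag as the obstacle, and it cannot be closed: the theorem is false with the constant $0.97$, and your $3/4$ is essentially the sharp universal constant. Concretely, take $s=(4,1,1)$: then $\sum_j\bar s_j^2=(16+1+1)/36=1/2$, so $\mu_X=1$, $\cii=2$, and the retained fraction is $17/18\approx 0.944<0.97$. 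Worse, the very configuration you identify as extremal for your crude bound --- one moderate entry followed by a flat block --- genuinely drives the ratio toward $3/4$: for each $n$ choose $q=(a,b,\dots,b)$ with $2(n-1)$ copies of $b$, $a+2(n-1)b=1$ and $a^2+2(n-1)b^2=1/n$ (solvable with $a\ge b$); then $\cii=n$, the tail is $(n-1)b^2$, and the tail fraction tends to $1/4$ as $n\to\infty$ (already $\approx 0.916$ at $n=3$ and $\approx 0.898$ at $n=4$). So the proposed convexity/majorization program exploiting the integrality of $\cii=\ceiling{1/Q}$ cannot recover $0.97$; the two-level extremizer you predict is precisely the witness against that constant.

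The discrepancy with the paper is explained by its own argument, which is not a proof of the stated claim: it silently replaces the arbitrary spectrum by a geometric model $s_i\doteq a\cdot b^{i-1}$, computes the retained fraction in closed form under that model, and minimizes over $(b,\ci)$ to obtain $0.97$. That constant is therefore a property of geometrically decaying spectra, not of all matrices with a given SVD-channel entropy; your assumption-free route is the more honest one and lands on the true worst-case constant $3/4$. If the $0.97$ figure is to be salvaged, the decay hypothesis (or a quantitative surrogate for it) has to be imported explicitly into the statement of the theorem rather than into the proof.
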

\begin{remark} \normalfont 
The assumption in Theorem~\ref{theorem::CEntropy} that $s_j(X)=a\cdot b^{j-1}$ usually hold in natural images, where the data have highly correlated channels. In such cases, the singular values usually decay exponentially \cite{IQA, TraceNorm}. 
\end{remark}

\noindent The proofs of Lemma~\ref{lemma:CEntropy} and Theorem~\ref{theorem::CEntropy} are provided in Appendix A. In the rest of the paper, we regard $\cii$ as the sufficient number of  channels to represent $X$.

% \subsubsection{SVD-Channel Entropy in CNNs}\label{subsec:CEntropy:DNN}
\textbf{SVD-Channel Entropy in CNNs} -- In DNNs, given input data with low-rank structure, it is crucial to measure how such structure (measured using SVD-channel entropy) changes after every layer so that we can systematically set the number of principal channels in TEEs. 
In CNNs, convolutional, batch normalization, pooling and non-linear layers such as ReLU are the basic layers. 
In this section, we mainly analyze how these basic operators change the structure of the data. All proofs are provided in Appendix A.

\noindent We now start with the convolutional layers. \\
\textbf{Convolutional layer} -- For a convolutional layer with input $X \in R^{\ci\times\Hi\times\Wi}$, we first bound the SVD-channel entropy of the outputs with $1\times 1$ kernels in Theorem \ref{theorem:CEntropy_conv1x1} and then we extend this to the general case of $k\times k$ kernels in Theorem \ref{theorem:CEntropy_convkxk}.
\begin{theorem}
\label{theorem:CEntropy_conv1x1}
For a convolution layer, given input $X \in R^{\ci\times\Hi\times\Wi}$ with SVD-channel entropy $\mu_X$, kernel $W \in R^{\co\times\ci\times 1\times 1}$, then the SVD-channel entropy of the output $Y\in R^{\co\times\Ho\times\Wo}$ is upper-bounded as follows 
\begin{center}
    $\mu_Y \leq \log \ceiling{2^{\mu_X}}$.
\end{center}
\end{theorem}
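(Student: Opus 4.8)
The plan is to reduce the $1\times 1$ convolution to a plain matrix product and then transport the effective‑rank bound of Theorem~\ref{theorem::CEntropy} from the input to the output. For a $1\times 1$ kernel the spatial dimensions are unchanged ($\Ho=\Hi$, $\Wo=\Wi$), and if $\overline{W}\in\mathbb{R}^{\co\times\ci}$ is the matrix obtained by squeezing the two singleton kernel dimensions of $W$, then the flattened output satisfies $\overline{Y}=\overline{W}\cdot\overline{X}$, where $\overline{X}\in\mathbb{R}^{\ci\times\Hi\Wi}$ and $\overline{Y}\in\mathbb{R}^{\co\times\Ho\Wo}$. (For stride larger than one the same identity holds with $\overline{X}$ replaced by a column‑subsampled copy of itself, which can only lower its rank, so the argument below is unaffected.)

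Next I would use Theorem~\ref{theorem::CEntropy} to treat $X$ as having effective rank $\cii=\ceiling{2^{\mu_X}}$: writing $\overline{X}=U\,\text{diag}(\bm{s})\,V^{*}$, Theorem~\ref{theorem::CEntropy} says the rank‑$\cii$ truncation $\Xin{\cii}$ retains at least $97\%$ of the squared‑singular‑value mass of $\overline{X}$, i.e.\ $\overline{X}=\Xin{\cii}+R$ with $\|R\|_F^2$ negligible relative to $\|\overline{X}\|_F^2$. Since $\overline{Y}=\overline{W}\,\overline{X}$ we have $\rank(\overline{Y})\le\rank(\overline{X})$, and more sharply the "principal component" $\overline{W}\,\Xin{\cii}$ of $\overline{Y}$ has rank at most $\cii$ while still carrying essentially all of $\overline{Y}$'s energy. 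In other words, the number of singular values of $Y$ that carry non‑negligible mass is at most $\cii$.

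Finally I would invoke Lemma~\ref{lemma:CEntropy} in the form ``a matrix whose (effective) rank is $\rho$ has SVD‑channel entropy at most $\log\rho$'': this is exactly the Cauchy--Schwarz step behind Lemma~\ref{lemma:CEntropy}, now applied with the at most $\cii$ significant normalized singular values of $Y$ in place of all $\co$ of them, since $\sum\bar s_j^2\ge 1/\cii$ when only $\cii$ of the $\bar s_j$ are (effectively) nonzero. This yields $\mu_Y\le\log\cii=\log\ceiling{2^{\mu_X}}$, with equality when $X$ (hence $Y$) is genuinely low rank. I would then hand off to the subsequent Theorem~\ref{theorem:CEntropy_convkxk} the task of reducing a $k\times k$ kernel to the $1\times 1$ case by viewing it as a sum of $k^2$ shifted $1\times 1$ convolutions.

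The step I expect to be the real obstacle is making ``effective rank $\cii$'' rigorous rather than heuristic: the discarded $\le 3\%$ of $\overline{X}$'s energy lives in its trailing singular directions, and multiplication by $\overline{W}$ could in principle amplify precisely those directions, so a fully clean inequality requires either accepting the $97\%$ slack (and stating $\mu_Y$ up to that slack), or imposing a mild regularity condition on $\overline{W}$ — e.g.\ that it does not stretch the trailing singular subspace of $\overline{X}$ much more than the leading one, which holds for the near‑isometric, well‑conditioned kernels typical after normalization layers. I would flag this explicitly and establish the stated bound under whichever convention the remainder of the paper adopts.
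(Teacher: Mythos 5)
Your proposal is correct and follows essentially the same route as the paper: write $\overline{Y}=\overline{W}\cdot\overline{X}$ (the paper does this channel-by-channel, expressing each $Y_i$ as a linear combination of the $\cii=\ceiling{2^{\mu_X}}$ orthogonal principal channels $X'_{\jj}$ of the input), conclude that $Y$ has at most $\cii$ principal channels, and apply the Lemma~\ref{lemma:CEntropy}-style Cauchy--Schwarz bound to get $\mu_Y\leq\log\cii$. The obstacle you flag --- that the discarded $\leq 3\%$ of energy could be amplified by $\overline{W}$ --- is real, and the paper resolves it exactly as you anticipate, by adopting the explicit convention (stated after the proof of Theorem~\ref{theorem::CEntropy}) of treating the rank-$\cii$ reconstruction of $X$ as exact rather than approximate.
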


\noindent Theorem~\ref{theorem:CEntropy_conv1x1} implies that low-rank structure is still preserved in the outputs for  convolutional layers with $1\times 1$ kernels. Therefore, before a convolutional layer, if $\ceiling{2^{\mu}}$ principal channels are used in TEEs, the same number of channels is still sufficient to approximate the outputs.

\begin{figure}
    \centering
    \includegraphics[width=.7\linewidth]{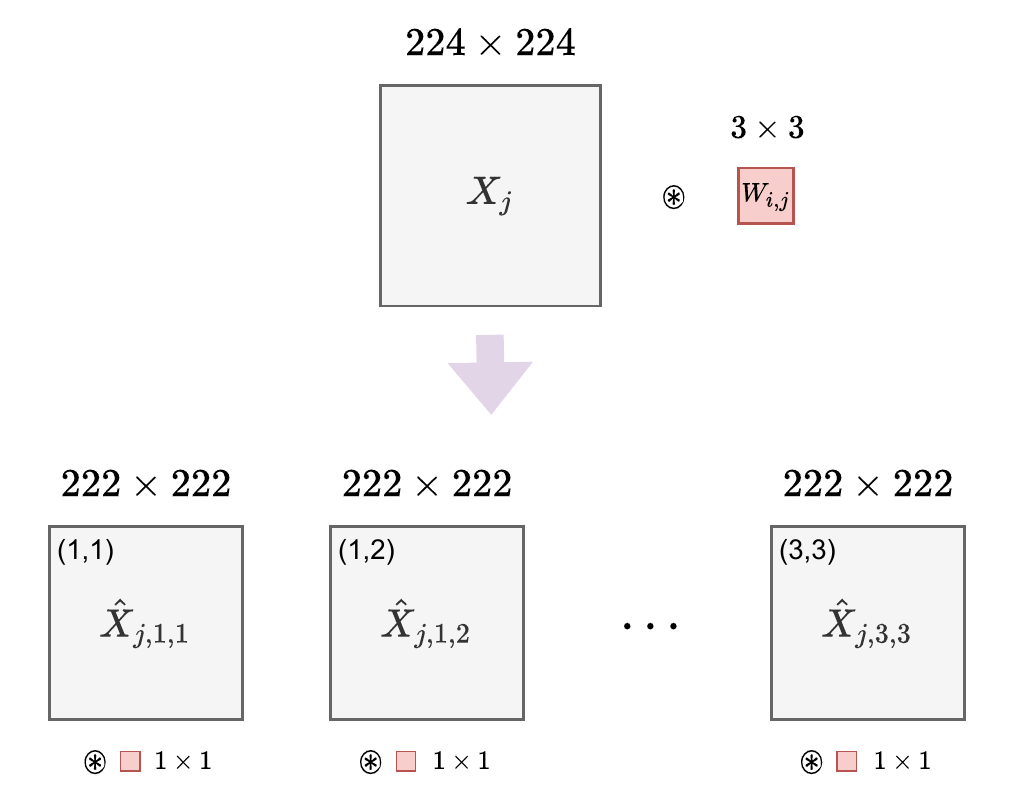}
    \caption{\footnotesize{An illustration of the conversion from $k\times k$ convolution to $1\times 1$ convolutions with $k^2$ different input channels.}}
    \vspace{-.4cm}
    \label{fig::k_to_1_conv}
\end{figure}

For a convolutional layer with a $k\times k$ kernel, it can be viewed as a $1\times 1$ convolutional layer with $k^2$ different ``input channels". Each ``input channel" is a patch of $X_j$, denoted as $\left \{ \hat{X}_{j,q,r} | 1\leq q, r\leq k \right \}$, as shown in Figure~\ref{fig::k_to_1_conv}. 
Given $X \in R^{\ci\times\Hi\times\Wi}$, $W \in R^{\co\times\ci\times k\times k}$, the $i$-th output channel can be rewritten as follows 
\begin{equation}
\label{eq::conv_k_to_1}
    Y_{i} = \sum_{j=1}^{\ci} W_{i,j}\circledast X_j=\sum_{j=1}^{\ci} \sum_{q=1,r=1}^{k,k} W_{i, j, q, r} \circledast \hat{X}_{j,q,r},
\end{equation}
where $W_{i,j,q,r}$ is a $1\times 1$ kernel after this conversion. \\
Therefore, a $k\times k$ convolution with $\ci$ input channels is equivalent to a $1\times 1$ convolution with $\ci\cdot k^2$ ``input channels": $\left \{ \hat{X}_{j,q,r} | 1\leq j\leq\ci,1\leq q, r\leq k \right \}$. To simplify the notation, we denote $\hat{X}_{:,q,r}$ as the $(q,r)$-th patch of all channels, and $\hat{X}_{j,:,:}$ as all patches of the $j$-th channel. 
% Then according to Lemma~\ref{lemma:CEntropy_along_channels}, the SVD-channel entropy of $\hat{X}_{:,q,r}$ is almost the same as that of the original data $X$.
% \begin{lemma}
% \label{lemma:CEntropy_along_channels}
% Given an input $X\in R^{\ci\times\Hi\times\Wi}$ with SVD-channel entropy $\mu_X$, $k\times k$ kernels, then for $\forall 1\leq q,r\leq k$, SVD-channel entropy in $\hat{X}_{:,q,r}$ satisfies:
% \begin{center}
%    $\mu_{\hat{X}_{:,q,r}} \leq \log \ceiling{2^{\mu_X}}$
%\end{center}
%\end{lemma}
%Lemma~\ref{lemma:CEntropy_along_channels} shows that the number of principal channels in $\hat{X}_{:,q,r}$ is almost the same as in $X$.

We now show in Theorem~\ref{theorem:CEntropy_convkxk} that, knowing the SVD-channel entropy of $\hat{X}_{j,:,:}$ for $j=1,\cdots, \ci$, the SVD-channel entropy of the outputs with $k\times k$ convolution can be accordingly bounded.
\begin{theorem}
\label{theorem:CEntropy_convkxk}
Given data $X$ with SVD-channel entropy $\mu_X$, $\cii = \ceiling{2^{\mu_{X}}}$,   $\mu_{\hat{X}_{j,:,:}} = \hat{\mu}_j$ for $1\leq j \leq \ci$ and WLOG suppose that $\hat{\mu}_1 \leq \hat{\mu}_2 \leq \cdots \leq \hat{\mu}_N$, then the SVD-channel entropy of $\hat{X}$ satisfies
\begin{align*}
\mu_{\hat{X}} \leq \log (\sum_{j=1}^{\cii} \ceiling{2^{\hat{\mu}_j}}) \cong \mu_X + \bar{\mu},
\end{align*}
where $\bar{\mu} = \frac{\sum_{j=1}^{\cii} \hat{\mu}_j}{\cii}$. Furthermore, the SVD-channel entropy of the output $Y$ after convolution with $W \in R^{\co\times\ci\times k\times k}$ is upper-bounded as 
\begin{center}
    $\mu_Y \leq  \log (\sum_{j=1}^{\cii} \ceiling{2^{\hat{\mu}_j}})$
\end{center}
\end{theorem}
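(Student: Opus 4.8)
The plan is to reduce everything to the $1\times1$ case already handled in Theorem~\ref{theorem:CEntropy_conv1x1}. By Equation~(\ref{eq::conv_k_to_1}), a $k\times k$ convolution on $X$ is an accumulation of $1\times1$ convolutions applied to the $\ci k^2$ patch ``channels'' $\hat X_{j,q,r}$; stacking the $W_{i,j,q,r}$ into a single $1\times1$ kernel $\hat W$, we get $Y=\conv(\hat X,\hat W)$. So once the first displayed bound $\mu_{\hat X}\le\log\big(\sum_{j=1}^{\cii}\ceiling{2^{\hat\mu_j}}\big)$ is established, Theorem~\ref{theorem:CEntropy_conv1x1} applied to this $1\times1$ convolution gives $\mu_Y\le\log\ceiling{2^{\mu_{\hat X}}}$, and since $2^{\mu_{\hat X}}\le\sum_{j=1}^{\cii}\ceiling{2^{\hat\mu_j}}$ with the right-hand side an integer, $\ceiling{2^{\mu_{\hat X}}}\le\sum_{j=1}^{\cii}\ceiling{2^{\hat\mu_j}}$ and the bound on $\mu_Y$ follows at once. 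Hence the whole argument rests on bounding $\mu_{\hat X}$.

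For $\mu_{\hat X}$, I would first group the channels of $\hat X$ by the original channel index $j$: for fixed $j$ the slices $\{\hat X_{j,q,r}:1\le q,r\le k\}$ are exactly $\hat X_{j,:,:}$, whose SVD-channel entropy is $\hat\mu_j$ by hypothesis, so by Theorem~\ref{theorem::CEntropy} its first $\ceiling{2^{\hat\mu_j}}$ principal channels already capture $\ge 97\%$ of its energy; that is, group $j$ spans a channel subspace of effective dimension at most $\ceiling{2^{\hat\mu_j}}$. Summing over all $j$ only gives $\mu_{\hat X}\le\log\sum_{j=1}^{\ci}\ceiling{2^{\hat\mu_j}}$, so the real work is to trim the sum to $\cii$ terms. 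For this I would exploit that, for each fixed offset $(q,r)$, the patch-extraction map $X_j\mapsto\hat X_{j,q,r}$ is linear and independent of $j$: writing the channels of $X$ in the SVD basis of Equation~(\ref{eq:X_svd}) and applying Theorem~\ref{theorem::CEntropy} to $X$ itself (so that the top $\cii=\ceiling{2^{\mu_X}}$ transformed channels carry essentially all of $X$), every $\hat X_{j,q,r}$ lies, up to a negligible residual, in the span of the patch slices of those $\cii$ transformed channels. Thus the whole channel space of $\hat X$ is contained in a sum of $\cii$ patch-groups, and adding their effective dimensions — ordered via the assumed sorting $\hat\mu_1\le\cdots\le\hat\mu_N$ — gives $\mu_{\hat X}\le\log\big(\sum_{j=1}^{\cii}\ceiling{2^{\hat\mu_j}}\big)$. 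Dropping the ceilings and using $\tfrac1{\cii}\sum_{j=1}^{\cii}2^{\hat\mu_j}\approx 2^{\bar\mu}$ together with $\log\cii\approx\mu_X$ then yields the approximate form $\cong\mu_X+\bar\mu$.

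The main obstacle is precisely this trimming step. The linearity argument naturally produces a bound in terms of the patch-entropies of the $\cii$ \emph{transformed} (SVD) channels, which are not literally the $\hat\mu_j$ of the original channels; identifying them with the $\cii$ smallest original patch-entropies — or, more conservatively, controlling them through the average $\bar\mu$ — is what turns a would-be strict inequality into the stated approximation, and is exactly where the $\cong$ enters. The other bookkeeping hazard is keeping the $97\%$-energy slack of Theorem~\ref{theorem::CEntropy} (true rank versus effective rank) consistent when these subspace containments are chained, though this affects only lower-order terms and not the leading-order bound.
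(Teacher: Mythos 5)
Your overall architecture matches the paper's: reduce the $k\times k$ convolution to a $1\times 1$ convolution on the patch tensor $\hat X$ via Equation~(\ref{eq::conv_k_to_1}), bound $\mu_{\hat X}$, then invoke Theorem~\ref{theorem:CEntropy_conv1x1}; your closing observation that $\ceiling{2^{\mu_{\hat X}}}\le\sum_{j=1}^{\cii}\ceiling{2^{\hat\mu_j}}$ because the right-hand side is an integer is correct and is exactly how the bound on $\mu_Y$ follows in the paper. The problem is the one you flag yourself: you never actually establish the central inequality $\mu_{\hat X}\le\log\big(\sum_{j=1}^{\cii}\ceiling{2^{\hat\mu_j}}\big)$. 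Your route --- pass to the SVD basis of $X$ and note that each $\hat X_{j,q,r}$ lies in the span of the patch slices of the top $\cii$ transformed channels --- yields a bound governed by the patch-entropies of the \emph{transformed} channels, which are not the $\hat\mu_j$ of the hypothesis, and you explicitly leave that identification as an unresolved obstacle. So the main claim of the theorem is asserted, not proved.

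The paper closes this step with a selection-and-inversion argument that your proposal lacks. It writes two representations of the slices: for each fixed offset $(q,r)$, the family $\hat X_{:,q,r}$ is spanned by at most $\cii$ principal channels $\mathcal{U}_{1,q,r},\dots,\mathcal{U}_{\cii,q,r}$ (its Lemma~2: each patch is the flattened channel with some columns zeroed, so the rank-$\cii$ structure of $X$ survives --- this is the per-offset analogue of your linearity remark), and for each fixed channel $j$, the family $\hat X_{j,:,:}$ is spanned by at most $\ceiling{2^{\hat\mu_j}}$ principal channels $\mathcal{V}_{j,\cdot}$ (the hypothesis). Equating the two gives, for each $(q,r)$, a system of $\ci$ equations in the $\cii$ unknowns $\mathcal{U}_{\cdot,q,r}$; since the coefficient submatrices coming from the SVD are taken to be invertible, only $\cii$ of these equations are needed to solve for the $\mathcal{U}$'s, and one chooses the $\cii$ equations indexed by the channels with the smallest $\hat\mu_j$ --- this is precisely where the sorting hypothesis $\hat\mu_1\le\cdots\le\hat\mu_N$ enters. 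Every $\mathcal{U}_{\cdot,q,r}$ then lies in the span of $\bigcup_{j=1}^{\cii}\{\mathcal{V}_{j,1},\dots,\mathcal{V}_{j,\ceiling{2^{\hat\mu_j}}}\}$, hence so does all of $\hat X$, giving the bound with exactly the $\cii$ smallest $\hat\mu_j$ and no detour through transformed-channel patch-entropies. (The paper's invertibility claim is itself asserted rather than justified, but that is the mechanism.) Without this step, or a substitute for it, your argument only delivers the untrimmed bound $\log\sum_{j=1}^{\ci}\ceiling{2^{\hat\mu_j}}$ plus an unquantified approximation.
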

%\noindent The proofs of Theorem~\ref{theorem:CEntropy_conv1x1} and Theorem  \ref{theorem:CEntropy_convkxk} are provided in Appendix A.
Based on Theorem \ref{theorem:CEntropy_convkxk}, with $k\times k$ kernels, the SVD-channel entropy of the outputs increases with $\mu_{\hat{X}_{j,:,:}}$ for $j=1,\cdots,\cii$. Intuitively, if $k$ is larger, more patches are included, then $\mu_{\hat{X}_{j,:,:}}$ will be higher. 
% Therefore, a large kernel size typically adds more SVD-channel entropy into the output $Y$. 
%For example, given $3\times 3$ kernels, $\bar{\mu}$ is around 1 (See Sec~\ref{subsec:entropykern}), which indicates that $2\times$ number of principal channels are needed to represent $Y$. 
More numerical analyses are presented in Section~\ref{subsec:entropykern}. 
% Furthermore, it is straightforward to extend  Theorem~\ref{theorem:CEntropy_conv1x1} and Theorem \ref{theorem:CEntropy_convkxk} to convolutional layers with a convolution stride larger than $1$.

\textbf{BatchNorm layer} -- Batch normalization is commonly used in CV models to reduce the \emph{internal covariate shift}. According to Theorem~\ref{theorem:CEntropy_batchnorm}, the SVD-channel entropy of outputs is almost the same as that of inputs in a BatchNorm layer.

\begin{theorem}
\label{theorem:CEntropy_batchnorm}
Given data $X$ with SVD-channel entropy $\mu_X$, then the SVD-channel entropy of output $Y$ after a batch normalization layer is upper-bounded as
\begin{center}
    $\mu_Y\leq \log(\ceiling{2^{\mu_X}} + 1)$.\\
\end{center}
\end{theorem}
Hence, the low-rank structure of inputs is still preserved after batch normalization.

\textbf{ReLU layer} -- As a non-linear layer, ReLU is the most commonly used operator in  DNNs. Theoretically bounding the  SVD-channel entropy after ReLU is infeasible. 
%In addition, the positions with negative value are randomly scattered. 
Instead, we empirically measure the SVD-channel entropy after a ReLU in Appendix~\ref{sec:SEntropyModels} and show that the ReLU layers also preserve the low-rank structure. 
%We calculate SVD-channel entropy before and after the ReLU layers, and average the difference between them. 

\textbf{Pooling layer} -- Pooling operations include max and average pooling.  Max pooling is a also non-linear operator. Similarly, we provide an empirical experiment that shows that a max pooling layer does not greatly change the low-rank structure either (See Appendix~\ref{sec:SEntropyModels}). Besides, the max and the average pooling layers usually deliver similar performance. As stated in Theorem~\ref{theorem:CEntropy_pooling}, the SVD-channel entropy after an average pooling layer is less than the one before pooling. Therefore, pooling layers still preserve the low-rank structure.
\begin{theorem}
\label{theorem:CEntropy_pooling}
For an average pooling layer, given input $X \in R^{\ci\times\Hi\times\Wi}$ with SVD-channel entropy $\mu_X$, the SVD-channel entropy in output $Y$ satisfies
\begin{align*}
    \mu_Y \leq \log \left \lceil 2^{\mu_X} \right \rceil.
\end{align*}
\end{theorem}

\subsection{Privacy Guarantees}
We now provide the DP guarantee of \AsymML{}. \AsymML{} adds a small Gaussian noise to $\XUT$ to ensure privacy. While the data stored in TEEs are protected by a secure hardware, the Gaussian mechanism further protects the information in the GPUs.

First, for a dataset $\mathcal{X} = \left \{ X^1, X^2, \cdots, X^i, \cdots, X^B \right \}$, we define the neighboring dataset $\mathcal{X}^{'} = \left \{ X^1, X^2, \cdots, 0, \cdots, X^B\right \}$ for any possible $i$-th record removed \cite{DiffPrivacy}. 
During training, we sample a batch $\mathcal{S}\in\mathcal{X}$ with size $b$. The  $\ell_2$-sensitivity of the lightweight SVD is given by $\Delta_2 =\sup_{\mathcal{X},\mathcal{X}^{'}}\norm{\mathcal{X}^{(U)}-\mathcal{X}^{'(U)}}\leq \xi\sup_{i}\norm{X^i}$, where $\xi$ is the upper bound of the ratio between the Frobenius norms of $\XUT$ and $X$.
Theorem \ref{theorem:dp} states that \AsymML{} is $(\epsilon, \delta)$-differentially private when a Gaussian noise $\mathcal N(0, 2\Delta^2_2 \ln{(1.25q/\delta)}/\epsilon^2. I)$ is added, where $q=b/B$ is the sampling probability and $I$ is the identity matrix. 

\begin{theorem}\label{theorem:dp}
 \AsymML{} is $(\epsilon, \delta)$-differentially private when a Gaussian noise $\mathcal N(0, 2\Delta^2_2 \ln{(1.25q/\delta)}/\epsilon^2. I)$ is added to the residuals, for $0<\epsilon, 0<\delta\leq q$, where $b$ is the batch size, $q=b/B$ is the sampling probability and $\Delta_2 \leq \xi\sup_{i}\norm{X^i}$ with SVD.
\end{theorem}
\begin{remark}

Compared to directly adding noise to the original data $X$,
\AsymML{} reduces the required noise variance by $\xi^2$. %where $\xi=\sup_{\mathcal{X}}\frac{\norm{\XUT}}{\norm{X}}$. 
Therefore, \AsymML{} significantly improves the utility-privacy trade-off  (See also the experiments of Section \ref{subsec:accuracy}).
\end{remark}
% For instance, given $\xi=0.05$, to achieve DP with $\epsilon=1, \delta=10^{-5}$, the required noise of using $\AsymML{}$ is just $\mathcal{N}(0, 0.23\Delta^2)$. It is much smaller than the required noise $\mathcal{N}(0, 23.8\Delta^2)$ when perturbing original inputs.

\section{Attack Models}\label{sec:attack}
Attack models are a crucial part in evaluating a privacy-preserving NN training/inference algorithm. An attacker usually uses any relevant information available such as model parameters, gradients, and any public prior information to reveal the private information being protected. 
% In neural networks, model inversion (MI) \cite{ModelInversion, MIanalysis} is one of the strongest attack models that reconstructs raw training data from well-trained models. 
In NNs, attacks reconstructing training data using the model parameters or the gradients are one of the strongest attack methods. The reason for the success of such attacks arises from leveraging the correlations between training data and model parameters as well as gradients during the training.

In this section, we consider two attacks that aim to reconstruct the target training dataset. The first attack is a model inversion (MI) attack \cite{SecretRevealer} that uses \emph{well-trained} models and some prior knowledge (noisy residual data $\DUT$ in our case) to find synthetic images similar to images in the target training dataset. 
The second attack is a gradient inversion attack \cite{deepgradients}. It finds synthetic data that can result in similar gradients as the original training dataset.
While a model inversion attack is mainly targeting a well-trained model, a gradient inversion attack can happen at any stage, especially for an initial model or a pre-trained model \cite{deepgradients}.
% Methods such as that of \cite{SecretRevealer} further use some prior knowledge (e.g. blurred or relevant online images) to design a stronger attack that can even attack models trained with differential privacy.
In addition, it is worth noting that common membership inference attacks that leverage output probabilities \cite{Membershipattack} fail in \AsymML{} since the logits are secured in TEEs.

\textbf{Notations}. We use $\Xtgt$ to denote the target training dataset, and, $\Zpub$ to denote the prior knowledge known to the attacker, including some public dataset $\Xpub$ and the residuals $\DUT$ available in untrusted platforms. Note that $\Xpub$ does not overlap with $\Xtgt$, but it may contain objects with similar labels.
Finally, $\Mtgt$ represents the target model under attack.

\subsection{Model Inversion Attack}
We now consider the model inversion attack.

\textbf{Assumptions} -- We assume that the attacker has access to the target model $\Mtgt$. Further, the attacker uses the noisy residuals $\DUT$ as prior information to help reconstruct the training dataset $\Xtgt$. In addition, the attacker knows general information such as label information. Therefore, it can find relevant resources (e.g. online images) to learn the target data distribution. % However, the attacker does not have access to $\DT$ in TEEs during training. 
\begin{figure*}[ht!]
\centering
\captionsetup{justification=centering}
\begin{subfigure}{.48\linewidth}
    \centering
    \includegraphics[width=.78\linewidth]{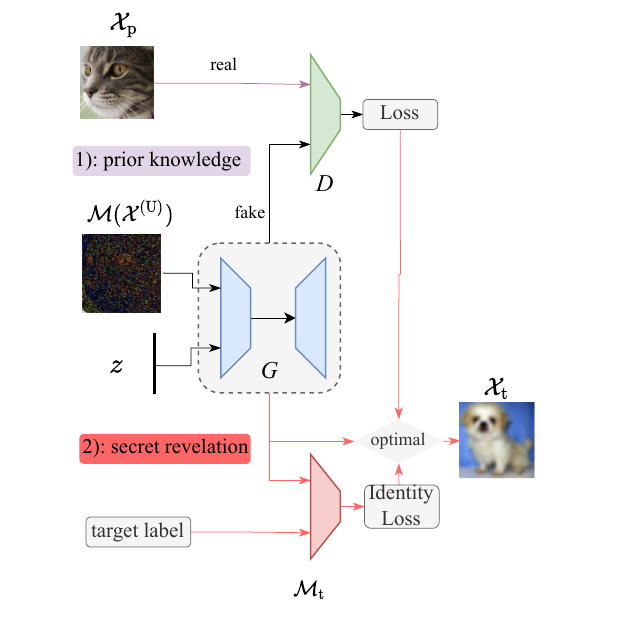}
    \caption{An illustration of the model inversion attack using the residual data $\XUT$ and any public relevant data as prior information. The attacker first trains a GAN model using prior knowledge (e.g., $\DUT$). Then, with the residual data $\DUT$ from the target training dataset along with the labels, the attacker uses the generator $G$ to reconstruct images that achieve the highest accuracy in the target model $\Mtgt$}
    \label{fig:attackmodel}
\end{subfigure}
\hfill
\begin{subfigure}{.48\linewidth}
    \centering
    \includegraphics[width=.9\linewidth]{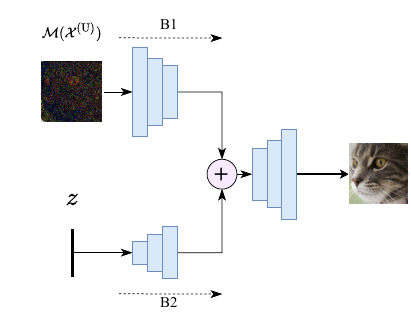}
    \caption{An illustration of the generator architecture that uses residual data $\DUT$ to reconstruct images. The generator consists of two branches: B1 for extracting features from residual data $\DUT$, and B2 for generating latent features. Then, these two branches are merged and further up-sampled using deconvolution to generate outputs that have similar distribution as the training dataset.}
    \label{fig:generator}
\end{subfigure}
\caption{Model inversion attack using $\DUT$ as prior knowledge.}
\vspace{-.4cm}
\end{figure*}

We design the attacker as shown in Fig.~\ref{fig:attackmodel} and Fig. \ref{fig:generator}. Specifically, the attack has two stages as follows.
\begin{enumerate}
    \item \textbf{Prior Knowledge Distillation}. In this stage, the attacker trains a generative adversarial network (GAN) with the public prior knowledge $\Zpub$. 
    
    Fig.~\ref{fig:generator} shows the generator $G$ architecture. It consists of two branches. The first branch B1 is used for learning features from the residual noisy data $\DUT$. The second branch B2 is used for generating latent features. On the other hand, the discriminator $D$ is a classical CNN. The detailed architectures are provided in Appendix~\ref{sec:miArchitecture}. The Wasserstein-GAN loss function is used during training as follows
    \begin{align}
    \min_{G}\max_{D} L(G, D) &= E_{x\in\Xpub}[D(x)]- \notag \\&E_z[D(G(z,\DUT))].
    \end{align}
    \item \textbf{Secret Revelation}. In this stage, with the noisy residual data $\DUT$ from the target dataset $\Xtgt$ along with the labels, the attacker uses the generator to reconstruct images that achieve the highest accuracy in the target model $\Mtgt$.
    
    We find the optimal latent vector $z$ that generates an almost-real image to the discriminator, while achieving the lowest identity loss in the target model $\mathcal{M}_{\text{t}}$, namely
    \begin{align}
        \hat{z}=\arg\min_{z} L_{\text{prior}}+\lambda L_{\text{id}},
    \end{align}
    where $L_{\text{prior}}$ is the loss in the discriminator, $L_{\text{id}}$ is the loss in the target model, and $\lambda$ controls the weight between $L_{\text{prior}}$ and $L_{\text{id}}$. These losses are given by
    \begin{equation}
        L_{\text{prior}} = -D(G(z)), \quad L_{\text{id}} = -\log [\mathcal{M}_t(G(z))],
    \end{equation}
    where $\mathcal{M}_t(G(z))$ is the output probability from the target model.
\end{enumerate}

\subsection{Gradient Inversion Attack}
Next, we consider a gradient version attack.

\textbf{Assumptions} -- We assume the attacker has access to the gradients in the untrusted GPUs, $\dWUT$, as well as the gradients on the inputs $\dX$. Similar to the model inversion attack, the attacker can use $\DUT$ as prior knowledge to help reconstruct synthetic images with similar gradients as the target dataset.

Figure \ref{fig:gradinversion} shows the procedures used in conducting a gradient inversion attack. Knowing the gradients $\nabla_W^{(U)}\mathcal{L}$ and $\dX$ from target images, the attack first feeds $\Mtgt$ with the noisy residual data $\DUT$.
After obtaining the gradients $\nabla_W^{(U)}\mathcal{L}^{'}, \nabla_X\mathcal{L}^{'}$, the $\ell_2$ distance between the gradients is computed as follows
\begin{equation}\label{eq:gradattackloss}
    L = \norm{\nabla_W^{(U)}\mathcal{L}-\nabla_W^{(U)}\mathcal{L}^{'}}^2 +\lambda\norm{\dX-\nabla_X\mathcal{L}^{'}}^2.
\end{equation}
Based on $L$, the attacker computes the gradients on $\mathcal{M}(\XUT)$, and optimizes inputs using a common gradient descent method.
\begin{figure}
    \centering
    \includegraphics[width=.9\linewidth]{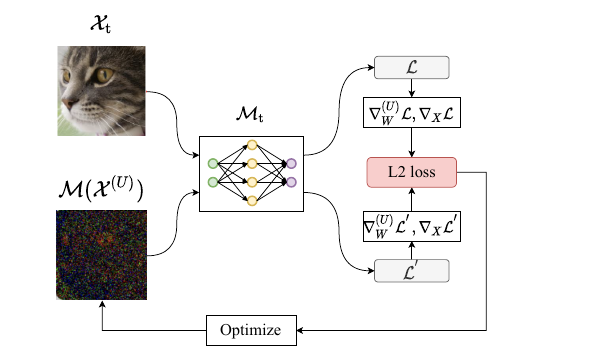}
    \caption{An illustration of the gradient inversion attack that generates synthetic images with similar gradients as the target dataset.}
    \label{fig:gradinversion}
\end{figure}

\section{Empirical Evaluation}\label{sec:exp}
In this section, we evaluate \AsymML{} in terms of the training accuracy, running time, robustness against attacks and information leakage. We perform our experiments on the following models and datasets. \\
For models, we consider VGG-16, VGG-19 \cite{VGG}, ResNet-18 and ResNet-34 \cite{ResNet}.
For datasets, we consider the CIFAR-10 \cite{CIFAR10} and the ImageNet \cite{ImageNet} datasets.  

\noindent We illustrate the implementation of \AsymML{} in detail in Section \ref{subsec:implementation}. Then, in Section \ref{subsec:entropykern},  we study the effect of the kernel size on the SVD-channel entropy. In Section \ref{subsec:accuracy}, we study the training accuracy of \AsymML{}. In Section \ref{subsec:running-time}, we compare between \AsymML{} and the baselines in terms of the running time. Finally, we present the privacy guarantee of \AsymML{} and its capability under 
model inversion and gradient inversion attacks in Section \ref{subsec:attackperf}.

\subsection{\AsymML{} Implementation}
\label{subsec:implementation}
\begin{figure*}[htb!]
    \centering
    \includegraphics[width=.7\linewidth]{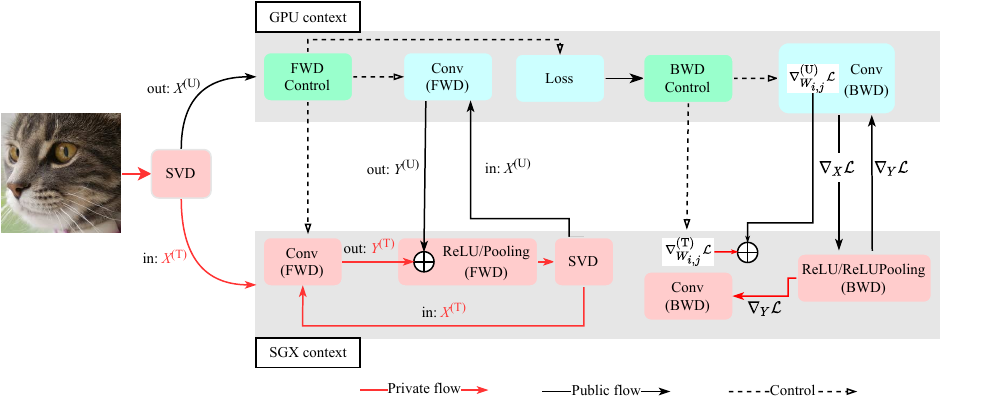}
    \caption{\footnotesize{Implementation of \AsymML{}  in a heterogeneous system with SGX enabled CPUs and GPUs. The forward and the backward passes in convolutional layers are coordinated by FWD and BWD Control, respectively. }}
    \vspace{-.4cm}
    \label{fig:asymml_impl}
\end{figure*}

\noindent We implement \AsymML{} in a heterogeneous system as shown in Figure \ref{fig:asymml_impl} with an Intel SGX enabled Xeon CPUs \cite{IntelSGX} and NVIDIA RTX5000 GPUs working as untrusted accelerators. \AsymML{} first reads a predefined model (e.g. from PyTorch \cite{PyTorch}), decomposes it into trusted and untrusted parts, and then offloads them to TEEs and GPUs respectively.
The operations in TEEs (e.g. $\convtrust$, ReLU, and Pooling) are supported by dedicated trusted functions. In order to reduce the CPU-GPU communications, a Pooling is fused with the preceding ReLU layer. 
Similar optimization also is applied to  convolutional layers and the following batch normalization layers.

During training, SGX and GPU contexts are created for the trusted and the untrusted operations. We use PyTorch as a high-level coordinator to distribute the computations, activate GPU/SGX context, compute loss, and update the model parameters (See the forward and backward control in Figure~\ref{fig:asymml_impl}). % For convolution layers, GPUs and TEEs work in parallel; for other operations, sequential execution is required.

During a forward pass, outputs of a convolutional layer in GPUs and TEEs will be merged in the following ReLU (and Pooling) layer in TEEs.
A lightweight SVD is then applied to decompose the output activations into low-rank and residual parts, which are then fed into the next convolution layer. 

During a backward pass, computing $\dX$ for ReLU and Pooling layers is performed in TEEs, while computing $\dX$ for convolutional layers is performed in GPUs. The partial gradients $\dWT$ and $\dWUT$ in a convolutional layer are computed in GPUs and TEEs respectively and merged into TEEs.

\subsection{SVD-Channel Entropy vs Kernel Size}\label{subsec:entropykern}
In this subsection, we investigate the effect of the kernel size on the SVD-channel entropy.
As Theorem~\ref{theorem:CEntropy_convkxk} states, the upper bound of the SVD-channel entropy of outputs in a convolutional layer increases with the kernel size. 
More patches are generated with large kernels in each channel (See Fig.~\ref{fig::k_to_1_conv}), therefore this leads to increasing SVD-channel entropy along the patches ($\mu_{\hat{X}_{j,:,:}}$). 

We present an empirical result on the SVD-channel entropy $\mu_{\hat{X}_{j,:,:}}$ with kernel sizes ranging from $1$ to $11$. 
In this experiment, the input data $X$ are randomly sampled from ImageNet. As detailed in Section \ref{subsec:rankbound}, for $k\times k$ kernels, $k^2$ input patches are generated. 
We compute the SVD-channel entropy along all patches. 
%Sec~3.1.
Fig. \ref{fig:CEntropy_patches} shows the SVD-channel entropy across $10$K randomly sampled images with different kernel size. The line plot indicates the mean value, while the violin plots show the distribution. 
With $3\times 3$ kernels, the SVD-channel entropy increases by around $1$, which implies that given $\cii$ principal channels in inputs, $2 \cii$ principal channels are sufficient for outputs after a convolution layer with $3\times 3$ kernel. 
Although it is noted that the SVD-channel entropy increases with the kernel size, small kernel sizes are commonly used in modern DNNs such as VGG and ResNet. Therefore, the SVD-channel entropy increment is well-managed.

\subsection{Training Accuracy}\label{subsec:accuracy}
Next, we investigate the training performance. In the experiments, we follow standard data preprocessing procedure. The training hyperparameters are as follows: weight decay is $0.0005$, momentum is $0.9$, the maximum number of epochs is $200$ for CIFAR-10 and $100$ for ImageNet. The initial learning rate ($\mathrm{lr}$) is $0.1$. We use the cosine annealing learning rate decay strategy.\\
We consider  the following four schemes.
\begin{figure}[!htb]
    \centering
    \includegraphics[width=.8\linewidth]{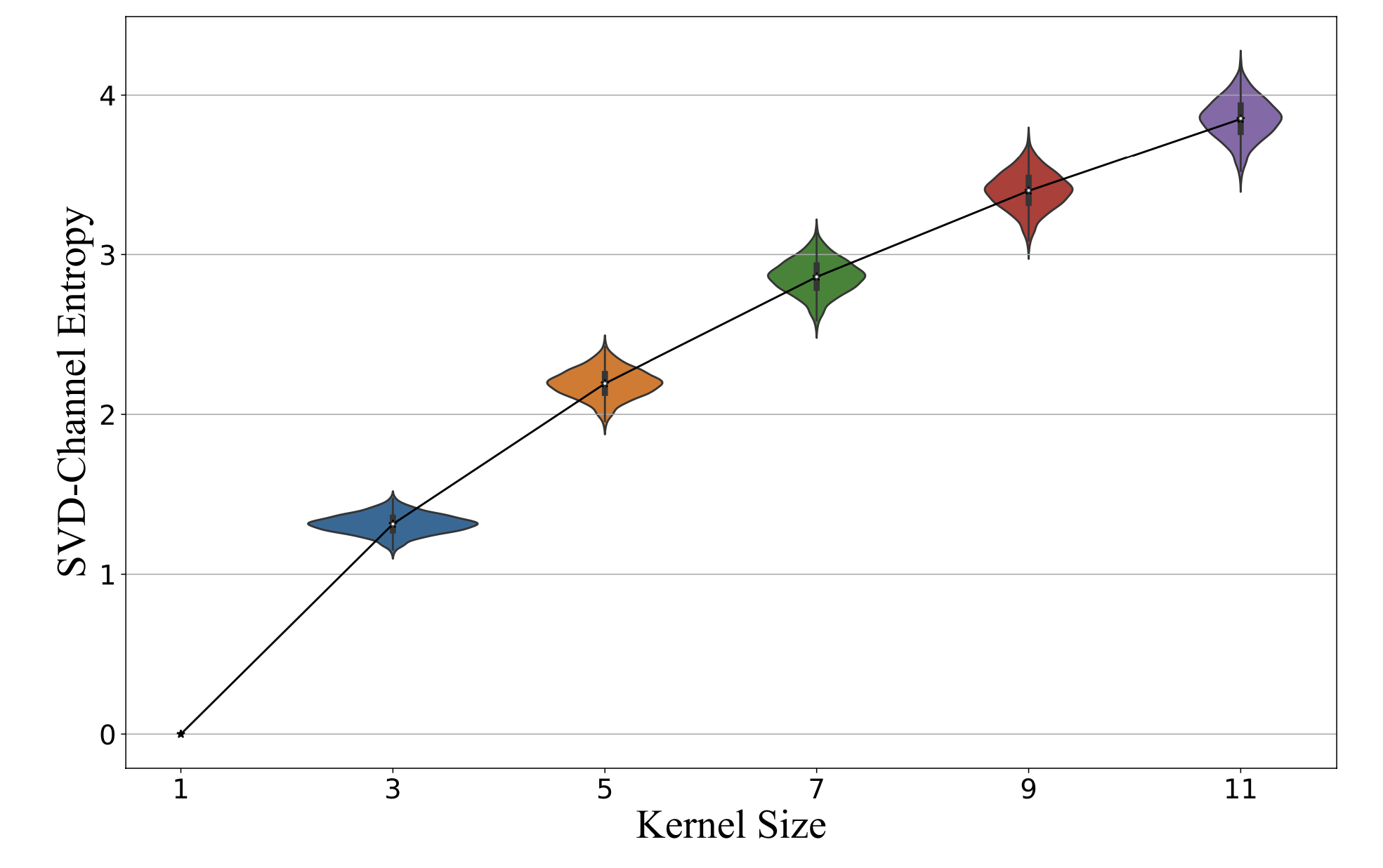}
    \caption{\footnotesize The SVD-channel entropy along patches of a images is shown. The line plot shows the mean value, while the violin plot shows the distribution. A wide violin indicates large number of images are around a particular SVD-channel entropy level.}
    \label{fig:CEntropy_patches}
\end{figure}

\begin{figure}
    \centering
    \includegraphics[width=.95\linewidth]{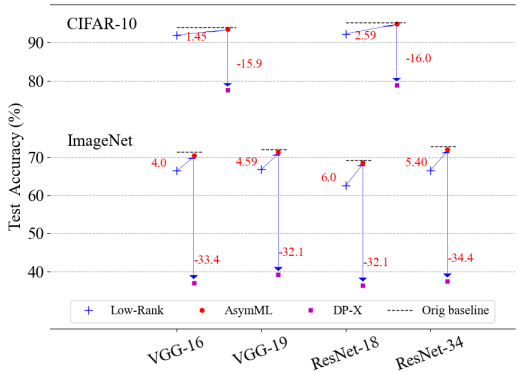}
    \caption{The accuracy after training with only the low-rank part and original data is shown for CIFAR-10 and ImageNet. The low-rank training achieves good accuracy, but not is sufficient compared to the original models. Hence, the residual data $\XUT$ is an indispensable component to filling this gap.}
    \label{fig:lowrankacc}
    \vspace{-.4cm}
\end{figure}
\begin{enumerate}
    \item \textbf{Training with low-rank data $\XT$(Low-Rank)}. In Low-Rank, each convolutional layer is only fed with the low-rank part $\XT$. Other operations are the same as the original model. Based on the theoretical analysis in Section \ref{subsec:rankbound} and the empirical evidence in Section~\ref{subsec:entropykern}, we compute $\XT$ as follows: for the first convolutional layer, $\XT$ only consists of the most principle channel ($\cii=1$); $\cii$ is doubled when another convolutional layer (VGG) or a residual block (ResNet) is added, while $\mathcal R$ remains unchanged for a ReLU or a pooling layer.
    \item \textbf{Training with noise added to the original data $X$ (DP-X).} In DP-X and \AsymML{}, we set $(\epsilon,\delta)=(1,10^{-5})$. According to Theorem \ref{theorem:dp}, we generate noise with the parameters given in Table \ref{tab:noiseparam}.
    \item \textbf{Training with noise added to the residuals $\XUT$ (\AsymML{})}. In \AsymML{}, we add noise to the residual parts obtained by the light-weight SVD. 
    \item \textbf{Training with the original data $X$ (Orig).} In Orig, the training is performed with the original data without SVD and without adding any noise. This results in the best accuracy, but that does not provide any privacy guarantee. 
\end{enumerate}
\noindent Fig.~\ref{fig:lowrankacc} shows the accuracy of Low-Rank, DP-X, \AsymML{} and Orig on CIFAR-10 and ImageNet. Since CIFAR-10 is a very small dataset, we only train smaller models (VGG-16, ResNet-18) on it. 
On CIFAR-10, Low-Rank already achieves very high accuracy. 
\AsymML{} further improves the accuracy by $1.45 \%$ and $2.59 \%$ in VGG-16 and ResNet-18, respectively. The final accuracy is almost the same as the original models. 
However, to achieve the same privacy guarantee, DP-X that directly adds noise to inputs suffers a significant accuracy drop.

Similar results are also observed on ImageNet. While Low-Rank training incurs a slightly larger accuracy drop, \AsymML{} achieves almost the same accuracy as the original models. DP-X still fails to preserve accuracy under the same privacy budget.

\begin{table}[!htb]
\caption{Noise parameters for training in DP-X and \AsymML{}.}
\label{tab:noiseparam}
\centering
\begin{tabular}{c|cccccc}
\toprule
 & b & B & q & $\xi$ &\multicolumn{2}{c}{$\sigma$} \\
\midrule
& & & & & \AsymML{} & DP-X \\
CIFAR-10 & 32 & 50K & 6e-4 & 0.05& 0.12 & 2.5 \\
ImageNet & 128 & 1.2M & 1e-4 & 0.05 & 0.11 & 2.2 \\
\bottomrule
\end{tabular}
\end{table}

% We train VGG-16/19, ResNet-18/34 on CIFAR-10 and ImageNet respectively. We adopt standard hyper parameter settings: weight decay is $0.0005$, momentum is $0.9$, maximum number of epochs are $225$ for CIFAR-10 and $100$ for ImageNet. The initial learning rate $\mathrm{lr}$ is $0.1$. For CIFAR-10, we decay $\mathrm{lr}$ by a factor of $10$  every $75$ epochs, while for ImageNet, we decay it by a factor of $10$ for every $30$ epochs.

% On ImageNet, the low-rank training incurs a larger accuracy drop, while $\XUT$ brings more accuracy improvement.  Hence, we can observe that 1) low-rank training in TEE achieves good accuracy with small compute/memory cost, but it is still not sufficient compared to the original models; 2) residual data $\XUT$ though contains much less sensitive information, is still an indispensable component to achieve the original baseline accuracy. Such observations show \AsymML{} is an effective framework that preserves both privacy and accuracy.

\subsection{Runtime Analysis}
\label{subsec:running-time}
We conduct training and inference on ImageNet and compare the runtime with two baselines: GPU-only, TEE-only. Following the standard data pre-processing procedure, we resize each image into $3\times 224\times 224$. The batch size is set to $32$. 
We compute the runtime as the average time of a forward and backward pass.
Computing the number of principal channels $\cii$ is TEEs is similar as in Section \ref{subsec:accuracy}.

% \noindent Based on theoretical analysis in Section ~\ref{subsec:rankbound} and the empirical evidence in Section~\ref{subsec:entropykern}, we compute the number of principal channels $\cii$ in TEEs as follows: for the first convolution layer, $\cii=1$; $\cii$ is doubled when another convolution layer (VGG) or a residual block (ResNet) is added, while $\cii$ remains unchanged when a ReLU or Pooling layer is added. Furthermore, $\cii$ is at most $32$ in all experiments, as it turns out that $32$ principal channels are sufficient to keep most information in data.

\textbf{Training Runtime. } For training, we compare the runtime performance with the TEE-only and the GPU-only executions. Fig.~\ref{fig:training_ImageNet} shows the actual runtime (red plots) and the relative slowdowns (bar plots) compared to the GPU-only execution. 
Compared to the TEE-only method, \AsymML{} achieves $7.5 - 7.6\times$ speedup on VGG-16/VGG-19, and up to $5.8 - 5.8\times$ speedup on ResNet-18/34. By encoding most information in TEEs with very low-rank representations, \AsymML{} achieves significant performance gains compared to TEE-only executions.
%It is worth noting that \AsymML{} still shows around $15\times$ slowdown compared to GPU-only execution. 
Although \AsymML{} shows around $15\times$ slowdown compared to \emph{untrusted} GPU-only execution, \AsymML{} protects the privacy of the training datasets unlike the GPU-only execution. 
% The slowdown arises from the computation cost in TEEs and the communication overhead between CPUs and GPUs.

\begin{figure}[!htb]
    \centering
    \begin{subfigure}{.5\textwidth}
      \centering
      \includegraphics[width=\linewidth]{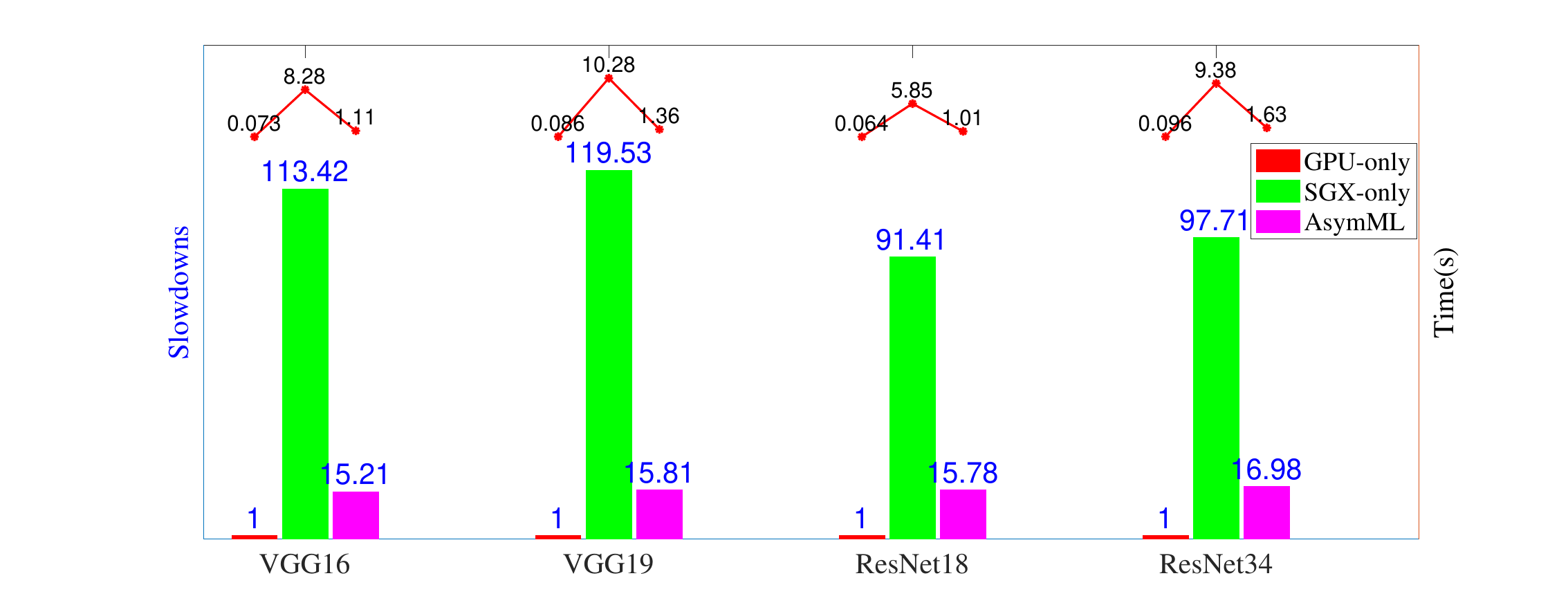}
      \caption{\footnotesize{The training time of \AsymML{} compared to the SGX-only  and the GPU-only executions is shown.}}
      \label{fig:training_ImageNet}
    \end{subfigure}
    \vfill
    \begin{subfigure}{.5\textwidth}
      \centering
      \includegraphics[width=\linewidth]{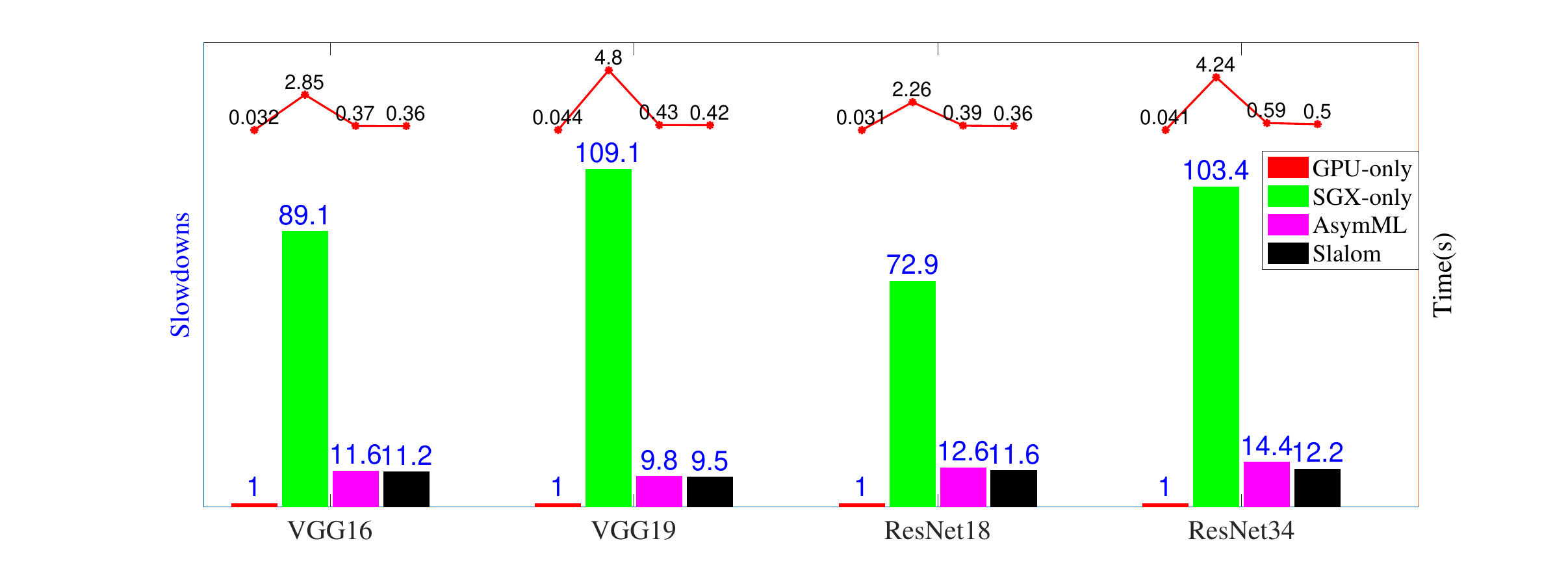}
      \caption{\footnotesize{The inference time of \AsymML{} compared to the SGX-only execution, the GPU-only execution and Slalom is shown.}} 
      \label{fig:inference_ImageNet}
    \end{subfigure}
    \caption{\footnotesize{Training and Inference performance on ImageNet. Bar plot shows slowdowns compared to GPU-only execution; red dotted lines are the corresponding running time.}}
    \label{fig:training_inference_ImageNet}
    \vspace{-.4cm}
\end{figure}

\textbf{Inference Runtime. } We also compare \AsymML{} to Slalom \cite{Slalom} in terms of inference time. While a part of the convolution is performed in TEEs, inference using \AsymML{} is just slightly slower than Slalom. Therefore, the low-rank representation in TEEs does not incur significant additional costs.
% For inference, in addition to the TEE-only and the GPU-only baselines, we further compare with Slalom \cite{Slalom}. Figure~\ref{fig:inference_ImageNet} shows the  actual inference time (red plots) and the relative slowdowns (bar plots). Since a small part of the convolution is performed in TEEs, inference using \AsymML{} is slightly slower than Slalom, which is reasonable since Slalom only performs masking/unmasking operations in convolutional layers and non-linear operations. Furthermore, Slalom requires an offline preprocessing phase. However, as \AsymML{} effectively exploits the low-rank data structure, \AsymML{} still does not sacrifice too much performance. When compared to the TEE-only execution, \AsymML{} achieves even more speedups than during training: $7.7-11.2\times$ faster on VGG-16/VGG-19, and $5.8-7.2\times$ faster on ResNet-18/ResNet-34.

\begin{figure}[!htb]
\centering
\begin{subfigure}{\linewidth}
    \centering
    \captionsetup{justification=centering}
    \includegraphics[width=.8\linewidth]{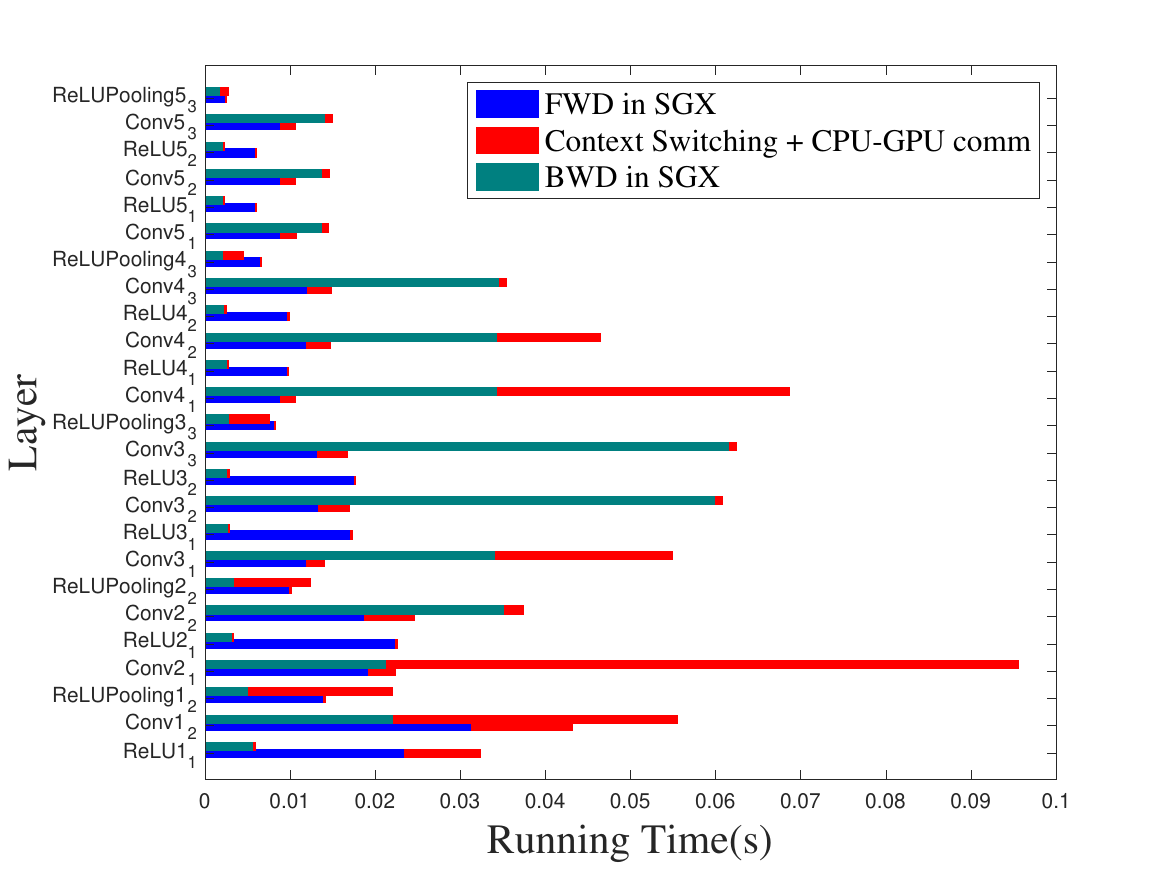}
    \caption{\footnotesize{The runtime of VGG-16 is shown.}}
    \label{fig:runbreakdown_resnet18}
\end{subfigure}
\vfill
\begin{subfigure}{\linewidth}
    \centering
    \captionsetup{justification=centering}
    \includegraphics[width=.9\linewidth]{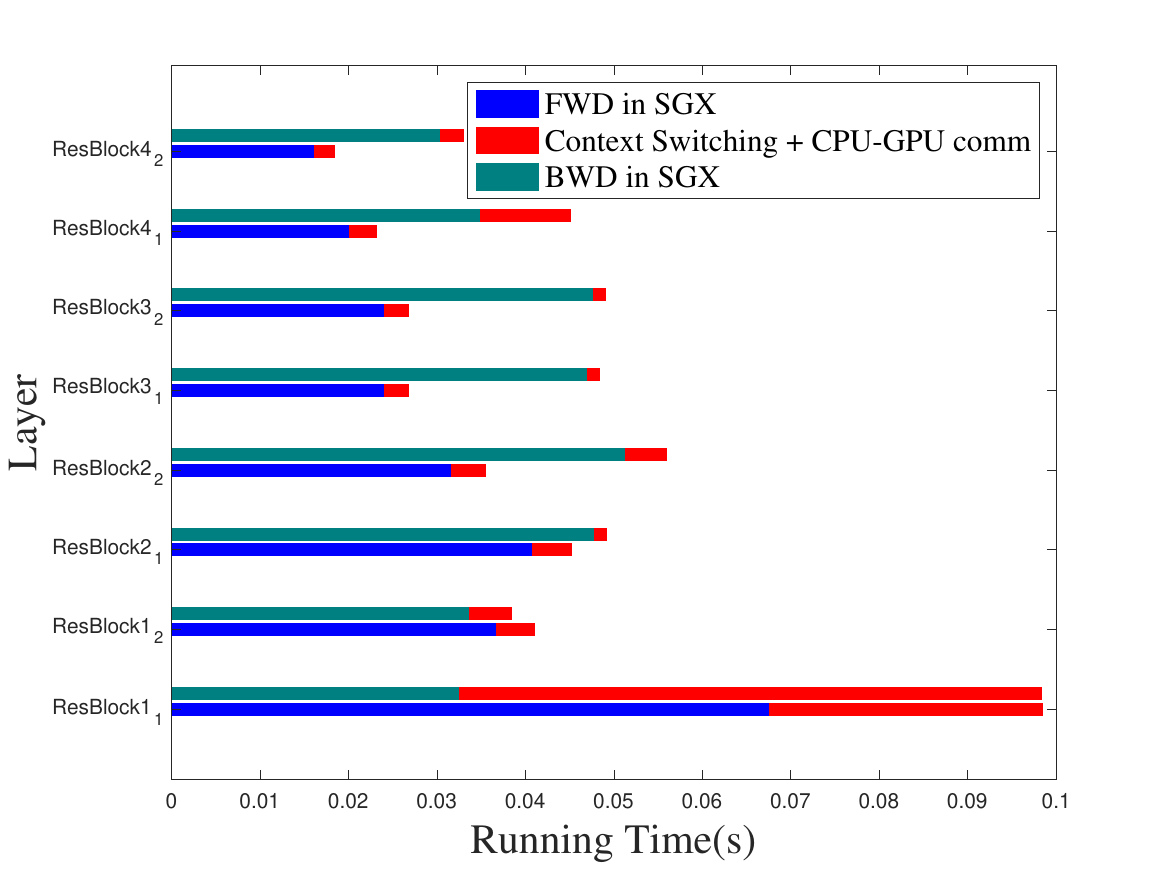}
    \caption{\footnotesize{The runtime of ResNet-18 is shown.}}
    \label{fig:runbreakdown_resnet34}
\end{subfigure}
\caption{An illustration of the runtime breakdown in VGG-16 and ResNet-18 is shown. }
\label{fig:runbreakdown}
\vspace{-.4cm}
\end{figure}

\textbf{Runtime Breakdown. } To better identify the main bottlenecks on a heterogeneous platform, we  profile the running time of the forward and backward passes in \AsymML{}. We break down the running time into computation time and communication time. We show the runtime breakdown for VGG-16 and ResNet-18 in  Fig.~\ref{fig:runbreakdown}. More results are provided in Appendix \ref{sec:runtimemore}.

\begin{itemize}
\item \textbf{Forward Pass}. For the forward pass, the blue bars show the time spent in TEEs, while the red bars show CPU-GPU communication. 
In the early convolutional layers, due to the large number of features, the communication from GPUs to CPUs brings notable costs, which then becomes marginal in later layers. 
% In addition, since there is no CPU-GPU communication in the ReLU and the Pooling layers during the forward pass, the additional cost in those layers is only caused by switching the running context from GPU (convolution) to SGX context (ReLU, Pooling), which contributes little to the total running time.

\item \textbf{Backward Pass.} For the backward pass, the green bars show the time in TEEs, and the red bars show CPU-GPU communication. The additional cost for CPU-GPU communication is much more dominant, especially in early convolution layers. The reason is that during the backward pass, not only are input gradients $\dX$ but also parameter gradients $\dW$ transferred between CPUs and GPUs. 
% As shown in Figure~\ref{fig:asymml_impl}, $\dWT$ in the SGX enclaves is copied from CPUs and accumulated in GPUs, while $\dX$ is copied to CPUs to feed backward on the ReLU and the Pooling layers, whose output gradients are then copied back to GPUs for backward on another convolutional layer. As a result, the communication cost is higher in backward passes.
\end{itemize}
%\begin{figure}[htb!]
%\centering
%\begin{subfigure}{.95\linewidth}
%    \centering
%    \includegraphics[width=.9\linewidth]{figure/runningbreakdown.pdf}
%    \caption{\footnotesize{The runtime of VGG-16 is shown.}}
%    \label{fig:runbreakdown_vgg16}
%\end{subfigure}
%\vfill
%\begin{subfigure}{\linewidth}
%    \centering
%    \includegraphics[width=.9\linewidth]{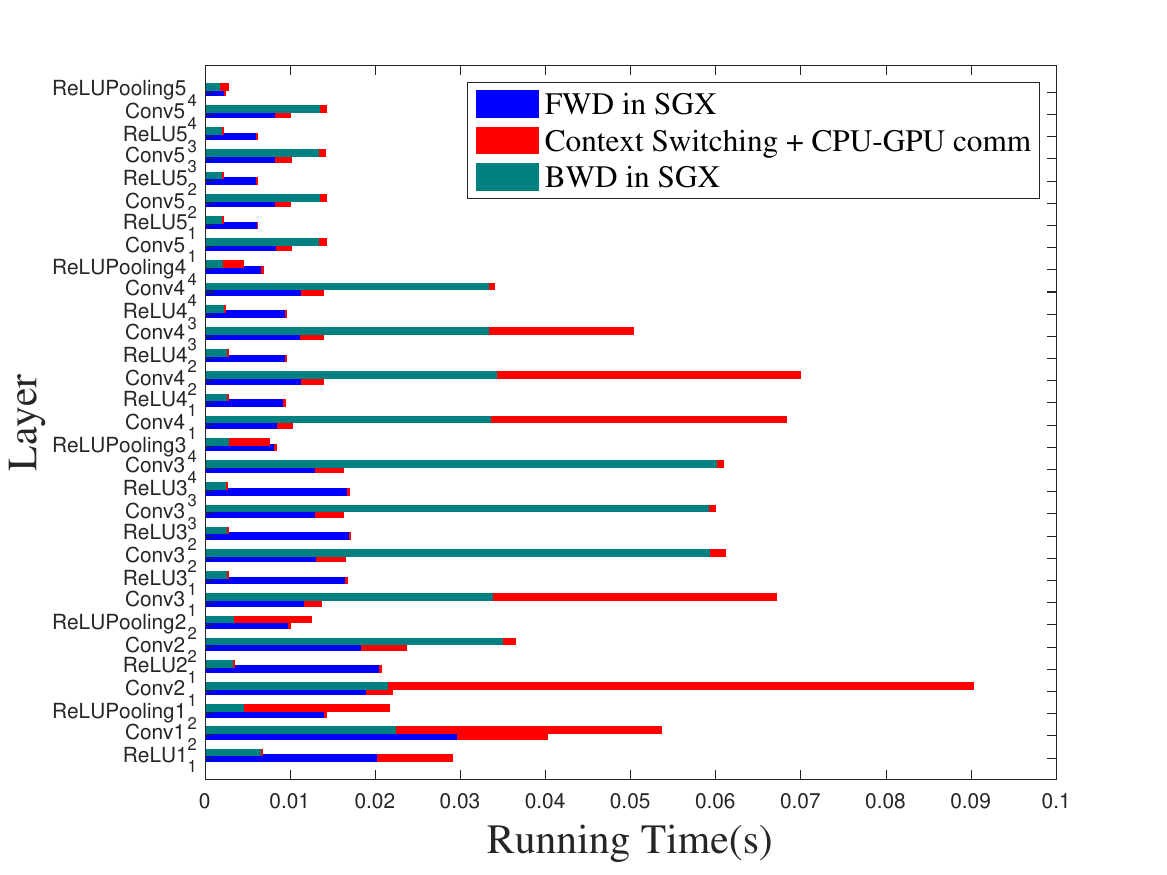}
%    \caption{\footnotesize{The runtime of VGG-19 is shown.}}
%    \label{fig:runbreakdown_vgg19}
%\end{subfigure}
%\caption{The runtime breakdown in VGG-16 and VGG-19 is shown.Time in data movement is relatively high in early convolutional layers, which then becomes marginal in later  layers.}
%\label{fig:runbreakdown_vgg}
%\end{figure}
\subsection{Privacy Protection}
\label{subsec:attackperf}
% To measure the privacy leakage as a result of revealing the residual data $\XUT$, we use 1)  mutual information to directly estimate leakage in $\XUT$, and 2) VGG-16 and ResNet-18 as the target models and analyze how the MI attack of Section \ref{sec:attack} reconstructs the training set with the assistance of $\XUT$. We start with the mutual information. 

%\noindent \textbf{Mutual Information. } We use the normalized mutual information \cite{InfoTheory} between $X$ and $\XUT$, denoted by $\frac{\mathcal{I}(X; \XUT)}{\mathcal{I}(X; X)}$, as the metric to measure the privacy leakage in untrusted GPUs. As the probability distribution of $X$ is unknown, we use a mutual information tool \cite{MITool} to numerically estimate $\mathcal{I}(\cdot)$. 
%According to the data processing inequality \cite{InfoTheory}, $\XUT$ in the first layer reveals most information about the original data.
%Hence, we use it to estimate $\mathcal{I}(X; \XUT)$. Furthermore, $\mathcal{I}(\cdot)$ is averaged across all samples in the dataset. As shown in Table \ref{tab:attackperf} (2nd column), $\XUT$ only leaks $4 \%$ percent of total information in $X$. 

%Next, we analyze the privacy leakage as a result of the attack of Section \ref{sec:attack}.
In this subsection, we evaluate \AsymML{}'s privacy protection against the two attacks of Section \ref{sec:attack}: the model inversion attack and the gradient inversion attack. We use CIFAR-$10$ to demonstrate the performance of \AsymML{} under such attacks in all experiments. We use the same noise in Table \ref{tab:noiseparam}.
We first list the metrics used to evaluate the performance of \AsymML{} and then we discuss detailed results.

\textbf{Metrics.} We consider the following  metrics: peak signal-to-noise ratio (PSNR), structural similarity index (SSIM), and the accuracy under the target model ($\mathrm{Acc}_{\mathcal{M}}$). PSNR is used to measure the pixel-wise similarity between two samples, while SSIM is used to measure the visual quality of a human visual system \cite{psnrssim}. They are both widely used in image quality assessment. Large PSNR ($\leq \infty$) and SSIM ($\leq 1$) indicate more similarity between original and the reconstructed images.
Finally, we also report accuracy on the target model $\mathcal{M}_{\text{t}}$ to test how it recognizes the reconstructed data.

\noindent \textbf{Model Inversion Attack.} In this experiment, we first partition the dataset into two parts. The first part is used as a public dataset $\mathcal{X}_{\text{p}}$ to train the attack model and the other part $\mathcal{X}_{\text{t}}$ is used as a  private dataset to train the target model $\mathcal{M}_{\text{t}}$. 
With $\mathcal{X}_{\text{p}}$ and  $\mathcal{X}_{\text{t}}$ created from one dataset, we simulate the scenario that the attack model can learn general knowledge about the target dataset such as features relevant to the labels. 
We use $\mathcal{M}(\XUT)$ in the first layer ($\XUT$ perturbed with small noise) as the prior information to the attack model as it reveals most information in inputs. 

In the first stage of training the attack model, we train the GAN model using an Adam optimizer with weight decay $0.0005$ and momentum $0.5/0.999$ for both the generator $G$ and the discriminator $D$. The batch size is set as $64$. We set the learning rate as $0.0025$ for $G$ and as $0.01$ for $D$ to achieve the best training performance. In the second stage, we optimize the latent vector $z$ using an SGD optimizer with learning rate $0.01$. Each batch of latent vectors is optimized for $200$ iterations.
\begin{table}[]
\centering
\begin{tabular}{ccccc}
\toprule
 Dataset & Target Model & PSNR & SSIM & $\mathrm{Acc}_{\mathcal{M}}$ \\
\midrule
 CIFAR-10 & VGG-16  & 9.43 & 0.12 & $10.74 \%$ \\
 CIFAR-10 & ResNet-18  & 9.31 & 0.09 & $10.56 \%$ \\
\bottomrule
\end{tabular}
\caption{Similarity between the reconstructed images and the original ones using various metrics in model inversion attacks, where $\text{PSNR}_{\text{max}}=\infty$, and $\text{SSIM}_{\text{max}}=1.0$ for two identical images.}
\label{tab:attackperf}
\end{table}

\begin{figure}[!htb]
\begin{subfigure}{.8\linewidth}
    \centering
    \captionsetup{justification=centering}
    \includegraphics[width=.8\linewidth]{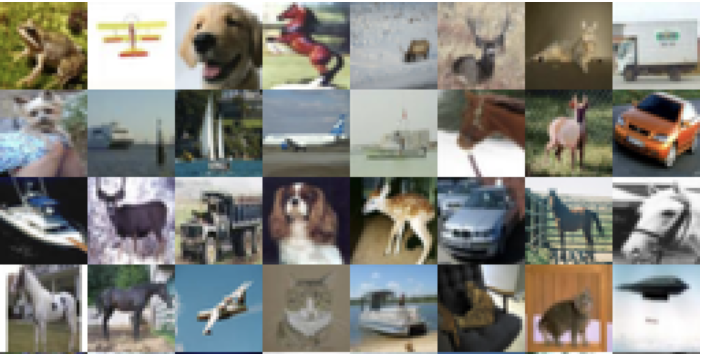}
    \caption{\footnotesize{The original data samples of CIFAR-10 dataset.}}
    \label{fig:attack_orig}
\end{subfigure}
\vfill
\begin{subfigure}{.8\linewidth}
    \centering
    \captionsetup{justification=centering}
    \includegraphics[width=.8\linewidth]{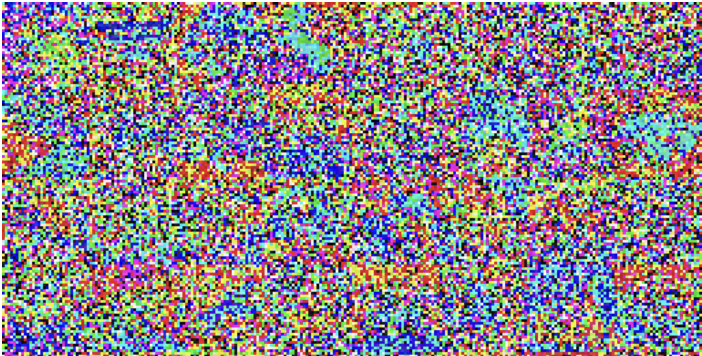}
    \caption{\footnotesize{The noisy residual data $\mathcal{M}(\XUT)$ of CIFAR-10 dataset.}}
    \label{fig:attack_residual}
\end{subfigure}
\begin{subfigure}{.8\linewidth}
    \centering
    \captionsetup{justification=centering}
    \includegraphics[width=.8\linewidth]{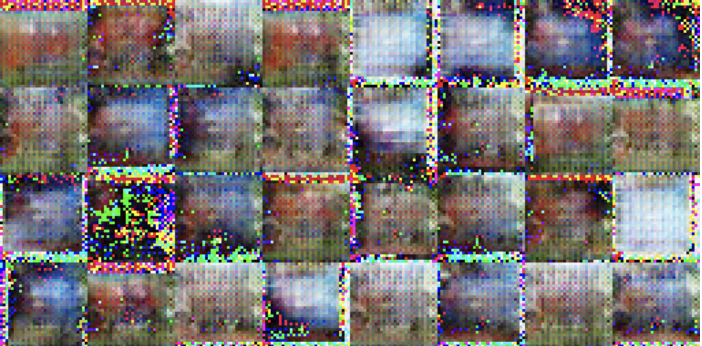}
    \caption{\footnotesize{The reconstructed samples using $\DUT$ in \AsymML{}.}}
    \label{fig:attack_reconasymml}
\end{subfigure}
\vfill
\begin{subfigure}{.8\linewidth}
    \centering
    \captionsetup{justification=centering}
    \includegraphics[width=.8\linewidth]{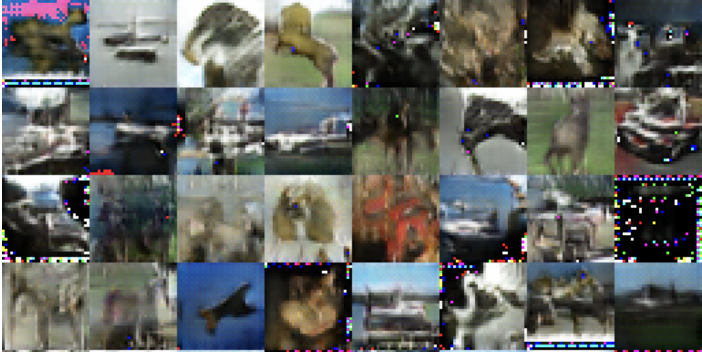}
    \caption{\footnotesize{The reconstructed samples using $\mathcal{X}^{(U)}$.}}
    \label{fig:attack_recon}
\end{subfigure}
\caption{The reconstructed data samples using the model inversion attack with residual data $\mathcal{M}(\XUT)$ as prior information, and ResNet-18 as the target model are shown.}
\label{fig:attackperf}
\end{figure}

Table~\ref{tab:attackperf} lists the performance of the model inversion attack using different metrics. Based on PSNR and SSIM, the reconstructed images using $\DUT$ as prior knowledge share very few similarities with the original ones. Furthermore, as reported in $\mathrm{Acc}_{\mathcal{M}}$ for target models VGG-16 and ResNet-18, with the $\DUT$ as prior information, the target models still fail to match the reconstructed images with their true labels (low $\mathrm{Acc}_{\mathcal{M}}$). 
Fig.~\ref{fig:attackperf} further presents some reconstructed data samples (Fig.~\ref{fig:attack_reconasymml}) compared to the original ones (Fig.~\ref{fig:attack_orig}). The target model is ResNet-18. 
Based on Fig.~\ref{fig:attack_residual}, it is observed that the residual $\DUT$ reveals little feature information about a specific class. As a result, even though it is used as prior knowledge when training GAN models in the MI attack, the reconstructed and original images still share few common features. Therefore, \AsymML{} can effectively defend against this MI attack. 

We further conduct a model inversion reconstruction using $\mathcal{X}^{(U)}$ as prior knowledge, as shown in Fig. \ref{fig:attack_recon}. 
It is interesting to observe that $\mathcal{X}^{(U)}$ without noise does provide a little useful information that helps reconstruct slightly better synthetic images. Therefore, adding small noise to $\mathcal{X}^{(U)}$ as in \AsymML{} is very critical to further hide such information.

\textbf{Gradient Inversion Attack.} In the experiments, we use $\Mtgt$ with random initialized values as in \cite{deepgradients}. $\mathcal{X}^{(U)}$ of the inputs (perturbed by small noise) is used as initial synthetic images.

During optimization, we use an L-BFGS optimizer with an initial learning rate of $0.1$. The history vector size in L-BFGS is $100$. The batch size is $64$. The maximum number of iterations for one optimization is $200$. $\lambda$ is Eq (\ref{eq:gradattackloss}) is $0.5$. Fig \ref{fig:dgattackperf} shows the original images and the reconstructed ones using this gradient inversion attack. It it observed that given $\mathcal{M}(\XUT)$ as initial value (Fig \ref{fig:dgattack_residual}), the gradient attack fails to generate similar images as the original ones (Fig \ref{fig:dgattack_reconasymml}), even though the optimization loss is very small (Fig. \ref{fig:dgattack_loss}). 
It is worth noting that the original gradient attack method \cite{deepgradients} achieves good reconstruction only on a single image, rather than on large batches. Such an observation is also aligned with our results.

We also use $\mathcal{X}^{(U)}$ without noise to perform a gradient inversion attack. As shown in Fig \ref{fig:dgattack_residual}, the reconstructed images are still very noisy, and share very few features with original ones.
\begin{figure}[!htb]
\begin{subfigure}{.8\linewidth}
    \centering
    \captionsetup{justification=centering}
    \includegraphics[width=.8\linewidth]{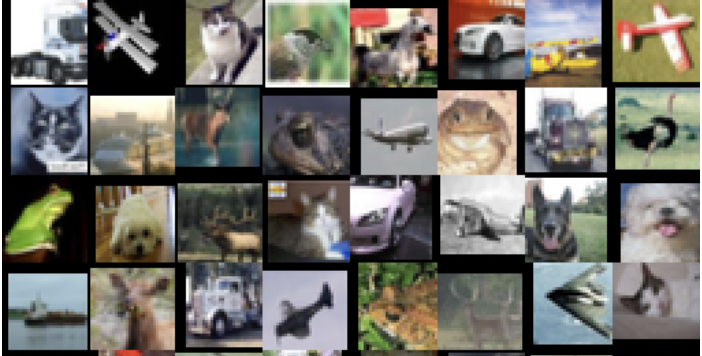}
    \caption{\footnotesize{The original data samples of CIFAR-10 dataset.}}
    \label{fig:dgattack_orig}
\end{subfigure}
\vfill
\begin{subfigure}{.8\linewidth}
    \centering
    \captionsetup{justification=centering}
    \includegraphics[width=.8\linewidth]{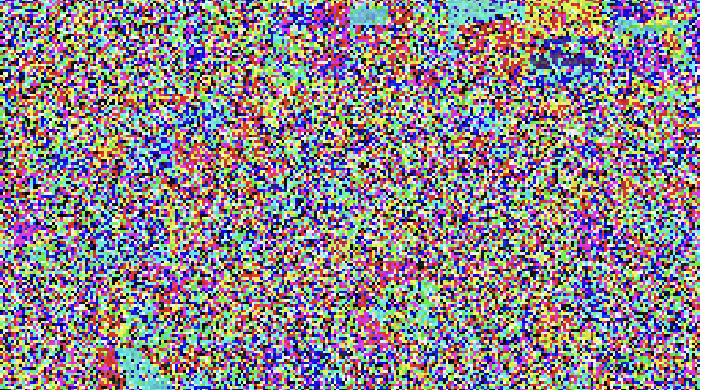}
    \caption{\footnotesize{The residual data $\mathcal{M}(\XUT)$ of CIFAR-10 dataset.}}
    \label{fig:dgattack_residual}
\end{subfigure}
\begin{subfigure}{.8\linewidth}
    \centering
    \captionsetup{justification=centering}
    \includegraphics[width=.8\linewidth]{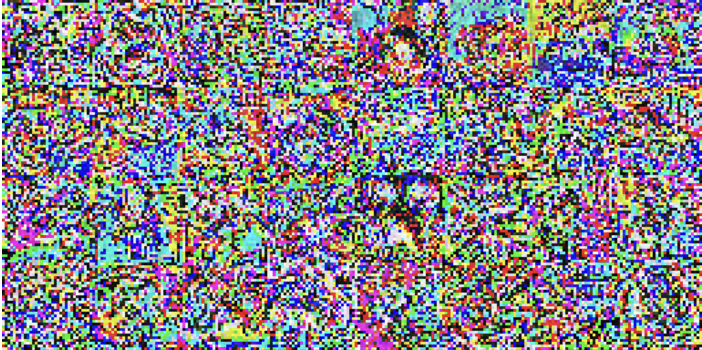}
    \caption{\footnotesize{The reconstructed samples using $\DUT$ in \AsymML{}.}}
    \label{fig:dgattack_reconasymml}
\end{subfigure}
\vfill
\begin{subfigure}{.8\linewidth}
    \centering
    \captionsetup{justification=centering}
    \includegraphics[width=.8\linewidth]{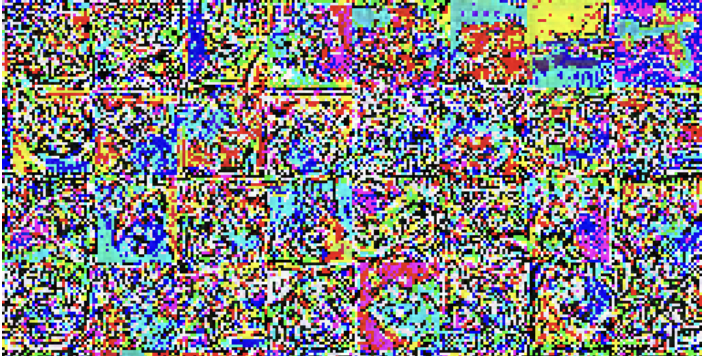}
    \caption{\footnotesize{The reconstructed samples using $\mathcal{X}^{(U)}$.}}
    \label{fig:dgattack_recon}
\end{subfigure}
\vfill
\begin{subfigure}{.8\linewidth}
    \centering
    \captionsetup{justification=centering}
    \includegraphics[width=.8\linewidth]{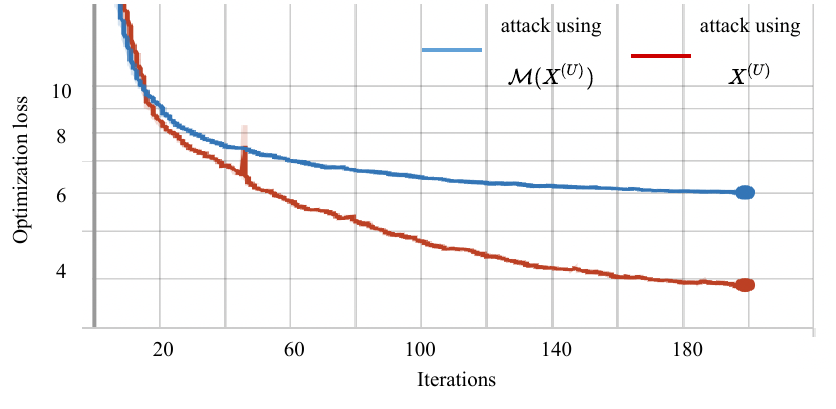}
    \caption{\footnotesize{Optimization loss in the gradient inversion attack.}}
    \label{fig:dgattack_loss}
\end{subfigure}
\caption{The reconstructed data samples using the gradients inversion attack with residual data $\mathcal{M}(\XUT)$ as prior information, and ResNet-18 as the target model are shown.}
\label{fig:dgattackperf}
\end{figure}

\section{Discussion}\label{sec:discussion}
\AsymML{} in this paper reduces the computation and memory costs in TEEs without significant privacy leaks. However, due to the additional communication overhead between GPUs and TEEs, the actual performance improvement does not exactly match the expectation in Fig.~\ref{fig:complexity}. As Fig.~\ref{fig:runbreakdown_vgg} in Appendix \ref{sec:runtimemore} shows, this additional overhead even dominates the runtime in layers with large-size  features. Architectures such as unified memory access (UMA) \cite{hUMA,AppleUMA} can potentially resolve this issue by relieving CPU-GPU communication burden, where \AsymML{} can also be leveraged.  
% On the other hand, there is yet no efficient tensor library designed for the SGX environment. The performance of AsymML can be further improved with future support for tensor operations in SGX.  
% Possible improvements for tensor operations in Intel SGX include: 1) accelerating convolutions via matrix multiplication as in Caffe\cite{Caffe}, which simplifies data access pattern, thus eases the bottleneck in memory operations in SGX; 2) using frequency-domain convolution to further reduces computation complexity as in Nvidia CUDA DNN library\cite{cuDNN}; 3) optimizing compiler design for the Intel SGX architecture to improve cache performance.

\AsymML{} is designed to protect privacy under the assumption that TEEs are secure against potential attacks such as side channel attacks \cite{SideChannelAttack}.  The case that side channel attacks breach TEE's security \cite{SGXspectra} based on information gained from the implementation of a computer system (e.g. power consumption, electromagnetic leaks) are out of the scope of this work. Finally, it is worth noting that \AsymML{} is compatible with security updates in hardware such as Intel SGX \cite{IntelSGX} and RISC-V Sanctum \cite{Sanctum}.

\section{Conclusion}\label{sec:conclusion}
In this paper, we have proposed an asymmetric decomposition framework, \AsymML{}, to decompose DNN models and offload computations onto trusted and untrusted fast hardware. 
The trusted part aims to preserve the main information in the data with manageable computation cost, and the untrusted part undertakes most computations. In such a way, \AsymML{} makes the best use of each platform in a heterogeneous setting.
We have then presented theoretical analysis showing that the low-rank structure is preserved in NNs, and \AsymML{} achieves $(\epsilon,\delta)$-DP guarantee by adding a Gaussian noise to $\XUT$. Our extensive experiments show that AsymML achieves gain up to $7.6\times$ in training. We also show that \AsymML{} provides strong privacy protection under model and gradient inversion attacks.

\subsection*{Acknowledgements}
This material is based upon work supported by Defense Advanced Research Projects Agency (DARPA) under Contract No. FASTNICS HR001120C0088, NSF grants CCF-1703575, CCF-1763673, CNS-2002874, and a gift from Intel/Avast/Borsetta via the PrivateAI institute.
We would like to also acknowledge the valuable feedback and discussions by Dr. Matthias Schunter from Intel.

% \newpage

\appendix
%\newpage
%\onecolumn
\section{Proofs of All Lemmas and Theorems}
\label{app:A}
In this appendix, we provide the proofs of all lemmas and theorems as follows. 
\begin{enumerate}
    \item Section \ref{subsec:entropy-bounds} presents the proof of Lemma \ref{lemma:CEntropy}.
    \item Section \ref{subsec:no_principle_channels} provides the proof of Theorem~\ref{theorem::CEntropy}. 
    \item Section \ref{subsec:output_entropy} provides the proof of Theorem \ref{theorem:CEntropy_conv1x1}. 
    \item Section \ref{subsec:output_entropy_k} provides the proof of Theorem \ref{theorem:CEntropy_convkxk}. 
    \item Section \ref{subsec:entropy_batchnorm} presents the proof of Theorem \ref{theorem:CEntropy_batchnorm}. 
    \item Section \ref{subsec:entropy_pooling} presents the proof of Theorem \ref{theorem:CEntropy_pooling}. 
    \item Section \ref{subsec:dp} provides the proof of Theorem \ref{theorem:dp}.
\end{enumerate}
\subsection{Proof of Lemma~\ref{lemma:CEntropy}}
\label{subsec:entropy-bounds}
\begin{proof}
Let $Q = \sum_{i=1}^{\ci} s_i$ and $P = \sum_{i=1}^{\ci} s_i^2$, in which $s_1 \geq s_2 \geq \cdots \geq s_N$. We note that $Q$ satisfies
\begin{align}
\label{eqn:P-Q-Ineq}
\sqrt{P}\leq Q \leq \sqrt{\ci P},
\end{align}
where the left equality holds when all singular values are zeros except $s_1$ and the right equality holds when all singular values are equal. 
According to Definition \ref{def:CEntropy}, we have 
\begin{align}
\label{eqn:entropy-Q-P}
\mu_X &= -\log \left(\sum_{i=1}^{\ci} \bar{s}_i^2(X) \right) = -\log \left(\sum_{i=1}^{\ci} \frac{s_i^2}{Q^2}\right) \notag \\&= -\log(\sum_{i=1}^{\ci} s_i^2) + \log Q^2 \notag =
 2\log Q - \log P.
\end{align}

\noindent Finally, from above equation, we have $0 \leq \mu_X \leq \log \ci$.
\end{proof}

\subsection{Proof of Theorem~\ref{theorem::CEntropy}}
\label{subsec:no_principle_channels}
\begin{proof}
Let $Q = \sum_{i=1}^{\ci} s_i$ and $P = \sum_{i=1}^{\ci} s_i^2$, and assume that $s_1 > s_2 > \cdots > s_N$. According to the assumption, the $i$-th singular value is given as $s_i=a\cdot b^{i-1}$, where $a>0, 0<b<1$. Hence, we have 
\begin{align*}
  &Q = \sum_{i=1}^{\ci}s_i = a\cdot \sum_{i=1}^{\ci}b^{i-1}=\frac{1-b^{\ci}}{1-b}, \notag \\
  &P =\sum_{i=1}^{\ci} s_i^2 = a^2\cdot \sum_{i=1}^{\ci}b^{2(i-1)}=\frac{1-b^{2\ci}}{1-b^2}, \notag
\end{align*}
\begin{align*}
  &2^{\mu_X} =\frac{Q^2}{P} =  \frac{(1+b)(1-b^{\ci})}{(1-b)(1+b^{\ci})}.
\end{align*}

\noindent Let $\eta = \frac{\sum_{i=1}^{\cii}s_i^2}{\sum_{j=1}^{\ci}s_j^2}$, which can be lower-bounded as follows
\begin{align*}
\eta (b, N) &= \frac{1-b^{2\cii}}{1-b^{2\ci}} \notag \geq \frac{1-b^{2\cdot 2^{\mu_X}}}{1-b^{2\ci}} \notag \\ &= \frac{1-b^{\frac{2Q^2}{P}}}{1-b^{2\ci}} =\frac{1-b^{\frac{2(1+b)(1-b^{\ci})}{(1-b)(1+b^{\ci})}}}{1-b^{2\ci}}.
\end{align*}
\noindent Finally by minimizing the  function $\eta(b, \ci)$, we get a minimum of $0.97$.
% It is straightforward to verify that $\min_{b,\ci} \frac{1-b^{\frac{2(1-b^{\ci})(1+b)}{(1+b^{\ci})(1-b)}}}{1-b^{2\ci}} \doteq 0.97$. 
Therefore, with $\cii = {\left \lceil 2^{\mu_X} \right \rceil}$ principal channels, the total energy in the reconstructed data is  at least $97\%$ of the total energy in the original data $X$. 
\end{proof}
\noindent In the later theorems and proofs, we regard $\cii$ as the sufficient number of principal channel to reconstruct $X$.
% And we use equality sign ``=" rather than the approximated one ``$\doteq$" when representing $X$ using $\cii$ principal channels.

\subsection{Proof of Theorem~\ref{theorem:CEntropy_conv1x1}}
\label{subsec:output_entropy}
\begin{proof}
Let $\cii = \left \lceil 2^{\mu_X} \right \rceil$, then $X_j = \sum_{\jj=1}^{\cii} a_{j,\jj}\cdot X_{\jj}^{'}$, where $X^{'}_{\jj}$ is the $\jj$-th principle channel of $X$, and $\overline{X^{'}}_{\jj}$ denote a flatten vector from $X^{'}_{\jj}$. Then,  the $i$-th output channel is given by 
\begin{align*}
Y_i = \sum_{j=1}^{\ci} W_{i,j}\circledast X_j =  \sum_{\jj=1}^{\cii} (\sum_{j=1}^{\ci} W_{i,j}\cdot a_{j,\jj})\circledast X_{\jj}^{'}.
\end{align*}
All channels in $Y$ can be then written as follows
\begin{align*}
    Y &= \left \{ Y_1, \cdots, Y_{\co} \right \} =  \footnotesize{\sum_{\jj=1}^{\cii} \left \{ (\sum_{j=1}^{\ci} W_{1,j}\cdot a_{j,\jj})\circledast X_{\jj}^{'}, \cdots\right \}}.
\end{align*}

\noindent Let 
\begin{align*}
\small
&Y_{\jj}^{'} = \left \{ (\sum_{j=1}^{\ci} W_{1,j}\cdot a_{j,\jj})\circledast X_{\jj}^{'}, \cdots,  (\sum_{j=1}^{\ci} W_{\co,j}\cdot a_{j,\jj})\circledast X_{\jj}^{'} \right \},
\end{align*}
then $Y =  \sum_{\jj=1}^{\cii} Y_{\jj}^{'}$. Since $W_{i,j}$ is a $1\times 1$ kernel, $Y_{\jj}^{'}$ can be written as

\begin{align*}
&Y_{\jj}^{'} = \left \{ (\sum_{j=1}^{\ci} W_{1,j}\cdot a_{j,\jj})\cdot X_{\jj}^{'}, \cdots,  (\sum_{j=1}^{\ci} W_{\co,j}\cdot a_{j,\jj})\cdot X_{\jj}^{'} \right \}.
\end{align*}

\noindent For any two principal channels, $X_{\jj_1}^{'}, X_{\jj_2}^{'}, \jj_1 \neq \jj_2$,  $\left <\overline{X^{'}}_{\jj_1}, \overline{X^{'}}_{\jj_2} \right > = 0$. Therefore, $Y_{\jj_1}^{'}, Y_{\jj_2}^{'}$ are constructed by two orthogonal channels.
% $\therefore \left \{ \sigma(Y) \right \} = \left \{ \sigma(Y_{\jj}^{'}) | \jj = 1,\cdots,\cii \right \}$
% and $\text{rank}(Y_{\jj}^{'}) = 1$
% Therefore, $\text{rank}(Y) \doteq \cii$. 
Y only has at most $\cii$ principle channels. Therefore, $\mu_Y \leq \log \cii = \log \left \lceil 2^{\mu_X} \right \rceil.$

% If convolution stride is greater than 1, then the down-sampled output can still be constructed by principle channels in $Y$ with corresponding positions filled with zeros.
\end{proof}

% \subsection{Proof of Lemma~\ref{lemma:CEntropy_along_channels}}
%\label{subsec:entropy-channel}
%\begin{proof}
%Flatten $\hat{X}_{:,q,r}$, $X$ as $\overline{\hat{X}}_{:,q,r}$ and $\overline{X}$. 
%Since $\hat{X}_{:,q,r}$ and $X$ only differs in at most $k^2-(\frac{k+1}{2})^2$ boundary pixels where values in $X$ are replaced with $0$s, $\overline{\hat{X}_{:,q,r}}$ and $\overline{X}$ also differs in at most  $k^2-(\frac{k+1}{2})^2$ columns.
%Then, $\overline{\hat{X}}_{:,q,r}$ can be regarded as $\overline{X}$ with at most  $k^2-(\frac{k+1}{2})^2$ columns reset as zeros. We use a set $S_0$ to denote these columns. 
%Let $\cii = \log \left \lceil 2^{\mu_X} \right \rceil$, then each row in $\overline{X}$ can be written as $ \overline{X}_j = \sum_{\jj=1}^{\cii} a_{j,\jj} \overline{X}_{\jj}^{'}$. Therefore, each row in $\overline{\hat{X}}_{:,q,r}$ can be written as $\overline{\hat{X}}_{j,q,r} = \sum_{\jj=1}^{\cii} a_{j,\jj} \overline{\hat{X}}_{\jj,q,r}^{'}$, where $\overline{\hat{X}}_{\jj,q,r}^{'}$ is the same as $\overline{X}_{\jj}^{'}$ except for values in columns $S_0$ are zeros. Therefore, $\overline{\hat{X}}_{:,q,r}$ has at most $\cii$ principle components. Namely, $\hat{X}_{:,q,r}$ has at most $\cii$ principle channels. Hence,  $\mu_{\hat{X}_{:,q,r}} \leq \cii = \log \left \lceil 2^{\mu_X} \right \rceil$.
%\end{proof}

\subsection{Proof of Theorem~\ref{theorem:CEntropy_convkxk}}
\label{subsec:output_entropy_k}
To prove Theorem~\ref{theorem:CEntropy_convkxk}, we first need a lemma to bound SVD-channel entropy for patches in all channels $\hat{X}_{:,q,r}$ is almost the same as that of the original data $X$
Then according to Lemma~\ref{lemma:CEntropy_along_channels}, we can bound SVD-channel entropy after $k\times k$ convolutions.
\begin{lemma}
\label{lemma:CEntropy_along_channels}
Given an input $X\in R^{\ci\times\Hi\times\Wi}$ with SVD-channel entropy $\mu_X$, $k\times k$ kernels, then for $\forall 1\leq q,r\leq k$, SVD-channel entropy in $\hat{X}_{:,q,r}$ satisfies:
\begin{center}
    $\mu_{\hat{X}_{:,q,r}} \leq \log \ceiling{2^{\mu_X}}$
\end{center}
\end{lemma}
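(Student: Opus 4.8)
The plan is to exploit the fact that forming the $(q,r)$-th patch of a channel is a single linear operation applied identically across all channels, and that linear maps cannot increase matrix rank. First I would invoke the decomposition used in the proof of Theorem~\ref{theorem:CEntropy_conv1x1}: with $\cii=\ceiling{2^{\mu_X}}$, each channel of $X$ is expressed as $X_j=\sum_{\jj=1}^{\cii}a_{j,\jj}X^{'}_{\jj}$, where $X^{'}_{\jj}$ are the $\cii$ principal channels of $X$, and (following the convention fixed right after the proof of Theorem~\ref{theorem::CEntropy}) this is treated as an exact identity, so $\rank(\overline{X})\le\cii$.

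Next I would note that $\hat{X}_{j,q,r}$ is obtained from $X_j$ by a fixed linear operator $\mathcal{P}_{q,r}$ (a spatial shift of $X_j$ restricted to the output window, with zero padding at the boundary when the convolution is padded) that does not depend on the channel index $j$. Pushing the decomposition through this operator gives $\hat{X}_{j,q,r}=\mathcal{P}_{q,r}(X_j)=\sum_{\jj=1}^{\cii}a_{j,\jj}\,\mathcal{P}_{q,r}(X^{'}_{\jj})$, so all $\ci$ channels of $\hat{X}_{:,q,r}$ lie in the span of the $\cii$ vectors $\mathcal{P}_{q,r}(X^{'}_{1}),\dots,\mathcal{P}_{q,r}(X^{'}_{\cii})$ --- orthogonality of these vectors is not needed, only that there are at most $\cii$ of them. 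Consequently $\rank(\overline{\hat{X}_{:,q,r}})\le\cii$, i.e. at most $\cii$ of its singular values are nonzero.

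Finally I would reuse the upper-bound computation from Lemma~\ref{lemma:CEntropy} verbatim, with $\ci$ replaced by the number $\rho$ of nonzero singular values: Cauchy--Schwarz gives $\left(\sum_j s_j\right)^2\le\rho\sum_j s_j^2$, hence $\sum_j\bar{s}_j^2\ge 1/\rho$ and therefore $\mu\le\log\rho$. Taking $\rho\le\cii$ and using monotonicity of $\log$ yields $\mu_{\hat{X}_{:,q,r}}\le\log\cii=\log\ceiling{2^{\mu_X}}$, as claimed. I expect the only real subtlety to be the first reduction: spelling out precisely that patch extraction is a channel-independent linear map, so that the $\cii$-term principal decomposition of $X$ transports to a $\cii$-element spanning set for $\hat{X}_{:,q,r}$; once that is granted, the remaining rank-to-entropy estimate is exactly the one already established for Lemma~\ref{lemma:CEntropy}.
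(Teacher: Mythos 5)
Your proposal is correct and follows essentially the same route as the paper: both arguments observe that extracting the $(q,r)$-th patch is a channel-independent linear map (the paper phrases it as zeroing a fixed set of columns of $\overline{X}$), transport the $\cii$-term principal decomposition $X_j=\sum_{\jj=1}^{\cii}a_{j,\jj}X^{'}_{\jj}$ through that map to conclude $\hat{X}_{:,q,r}$ has at most $\cii$ principal channels, and then bound the entropy by $\log\cii$ via the rank argument of Lemma~\ref{lemma:CEntropy}. Your explicit Cauchy--Schwarz step at the end merely spells out what the paper leaves implicit.
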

\begin{proof}
Flatten $\hat{X}_{:,q,r}$, $X$ as $\overline{\hat{X}}_{:,q,r}$ and $\overline{X}$. 
%Since $\hat{X}_{:,q,r}$ and $X$ only differs in at most $k^2-(\frac{k+1}{2})^2$ boundary pixels where values in $X$ are replaced with $0$s, $\overline{\hat{X}_{:,q,r}}$ and $\overline{X}$ also differs in at most  $k^2-(\frac{k+1}{2})^2$ columns.
Then, $\overline{\hat{X}}_{:,q,r}$ can be regarded as $\overline{X}$ with at most  $k^2-(\frac{k+1}{2})^2$ columns reset as zeros. We use a set $S_0$ to denote these columns. 
Let $\cii = \log \left \lceil 2^{\mu_X} \right \rceil$, then each row in $\overline{X}$ can be written as $ \overline{X}_j = \sum_{\jj=1}^{\cii} a_{j,\jj} \overline{X}_{\jj}^{'}$. Therefore, each row in $\overline{\hat{X}}_{:,q,r}$ can be written as $\overline{\hat{X}}_{j,q,r} = \sum_{\jj=1}^{\cii} a_{j,\jj} \overline{\hat{X}}_{\jj,q,r}^{'}$, where $\overline{\hat{X}}_{\jj,q,r}^{'}$ is the same as $\overline{X}_{\jj}^{'}$ except for values in columns $S_0$ are zeros. Therefore, $\overline{\hat{X}}_{:,q,r}$ has at most $\cii$ principle components. Namely, $\hat{X}_{:,q,r}$ has at most $\cii$ principle channels. Hence,  $\mu_{\hat{X}_{:,q,r}} \leq \cii = \log \left \lceil 2^{\mu_X} \right \rceil$.
\end{proof}

With Lemma \ref{lemma:CEntropy_along_channels}, we prove Theorem \ref{theorem:CEntropy_convkxk} as follows:
\begin{proof}
According to Lemma~\ref{lemma:CEntropy_along_channels}, 
for the $(q,r)$-th patches in all channels, $\hat{X}_{:,q,r}, \forall 1\leq q,r\leq k$ can be constructed by at most $\cii$ principle channels $\left \{ \mathcal{U}_{1,q,r},\cdots,\mathcal{U}_{\cii,q,r} \right \}$ as follows

\begin{equation*}
\left\{\begin{matrix}
\hat{X}_{1,q,r} & = & a_{1,1,q,r}\mathcal{U}_{1,q,r}+ \cdots + a_{1,\cii,q,r}\mathcal{U}_{\cii,q,r}\\ 
\vdots & & \\
\hat{X}_{\ci,q,r}& = & a_{\ci,1,q,r}\mathcal{U}_{1,q,r}+ \cdots + a_{\ci,\cii,q,r}\mathcal{U}_{\cii,q,r}.
\end{matrix}\right.
\end{equation*}

\noindent For all patches in channel $j$, $\hat{X}_{j,:,:} \forall 1\leq j \leq \ci$ can also be constructed by principle channels $\left \{ \mathcal{V}_{j,1},\cdots,\mathcal{V}_{j,\hat{N}_j} \right \}$, where $\hat{N}_j = \left \lceil 2^{\hat{\mu}_j} \right \rceil$,
 
\begin{equation*}
\left\{\begin{matrix}
\hat{X}_{j,1,1} & = & b_{j,1,1,1}\mathcal{V}_{j,1}+ \cdots + b_{j,p_j,1,1}\mathcal{V}_{j,\hat{N}_j}\\ 
\vdots & & \\
\hat{X}_{j,k,k}& = & b_{j,1,k,k}\mathcal{V}_{j,1}+ \cdots + b_{j,p_j,k,k}\mathcal{V}_{j,\hat{N}_j}.
\end{matrix}\right.
\end{equation*} 

\noindent By combining equation (1) and (2), we can express $\hat{X}_{:,q,r}$ as follows

\begin{equation*}
\label{eqn:thm3}
\left\{\begin{matrix}
 \hat{X}_{1,q,r} &= a_{1,1,q,r}\mathcal{U}_{1,q,r}+ \cdots + a_{1,\cii,q,r}\mathcal{U}_{\cii,q,r} \\
  &= b_{1,1,q,r}\mathcal{V}_{1,1}+ \cdots + b_{1,p_j,q,r}\mathcal{V}_{1,\hat{N}_1}\\ 
 \vdots &\\ 
 \hat{X}_{\ci,q,r} &= a_{\ci,1,q,r}\mathcal{U}_{1,q,r}+ \cdots + a_{\ci,\cii,q,r}\mathcal{U}_{\cii,q,r} \\
 & = b_{\ci,1,q,r}\mathcal{V}_{\ci,1}+ \cdots + b_{\ci,p_j,q,r}\mathcal{V}_{\ci,\hat{N}_{\ci}}.
\end{matrix}\right.
\end{equation*}
In (\ref{eqn:thm3}), the coefficients are obtained from SVD, therefore the determinant of any sub coefficient matrix is not zero. Hence, at most $\cii$ sub-equations are needed to derive $\mathcal{U}_{:,q,r}$ from $\mathcal{V}_{:,:}$.

\noindent If we pick the equations with least number of principal channels $\mathcal{V}$, then each channel in $\hat{X}$ can be fully constructed by the selected $\mathcal{V}$. At most, the total number of principal channels $\mathcal{V}$ needed is $\sum_{j=1}^{\cii}\hat{N}_j$. Then the total number of principle channels needed to construct $\hat{X}$ is at most $\sum_{j=1}^{\cii}\hat{N}_j$.

\noindent 
Therefore, $\mu_{\hat{X}} \leq \log (\sum_{j=1}^{\cii}\hat{N}_j) = \log (\sum_{j=1}^{\cii}\left \lceil 2^{\hat{\mu}_j} \right \rceil)$. When $\hat{\mu}_j$ are close, $\mu_{\hat{X}} \doteq \eta + \bar{\mu}$, where $\bar{\mu}$ is the average of for $\hat{\mu}_j$. Finally, according to Theorem~\ref{theorem:CEntropy_conv1x1}, $\mu_Y \leq \log \left \lceil 2^{\mu_{\hat{X}}} \right \rceil) \leq \log(\sum_{j=1}^{\cii}\left \lceil 2^{\hat{\mu}_j} \right \rceil)$.
\end{proof}

\subsection{Proof of Theorem~\ref{theorem:CEntropy_batchnorm}}
\label{subsec:entropy_batchnorm}
\begin{proof}
Let $\cii= \left \lceil 2^{\mu_X} \right \rceil$, then input $X_j = \sum_{\jj=1}^{\cii} a_{j,\jj}\cdot X_{\jj}^{'}$, where $X^{'}$ is the principle channels of $X$. With batch normalization operator, we have
\begin{align*}
Y_j = \frac{X_j - E[X_j]}{\sqrt{V[X_j]+\epsilon}} \cdot \gamma_i + \beta_i,
\end{align*}
where $E[X_j]$ and $V[X_j]$ is the mean and variance in channel $j$ across batches, $\gamma_j$ and $\beta_j$ are learnable parameters in BatchNorm layers and $\epsilon$ is a constant for numerical stability. We can then re-write $Y_i$ in terms of $X_{\jj}^{'}$ as 
\begin{align*}
% \label{eqn:thm4}
    Y_i &= \frac{\sum_{\jj=1}^{\cii} a_{j,\jj}\cdot X_{\jj}^{'} - E[X_j]}{\sqrt{V[X_j]+\epsilon}}\cdot \gamma_j + \beta_j \\
    &= \sum_{\jj=1}^{\cii} \frac{\gamma_j a_{j,\jj}}{\sqrt{V[X_j]+\epsilon}} X_{\jj}^{'} - \frac{\gamma_j E[X_j]}{\sqrt{V[X_j]+\epsilon}} + \beta_j\\
    % &= \sum_{\jj=1}^{\cii} \frac{\gamma_j a_{j,\jj}}{\sqrt{V[X_j]+\epsilon}} X_{\jj}^{'} - (\frac{\gamma_j E[X_j]}{\sqrt{V[X_j]-\epsilon}} - \beta_j)\cdot \bm{1}.
\end{align*}
\begin{align*}
    =\sum_{\jj=1}^{\cii} \frac{\gamma_j a_{j,\jj}}{\sqrt{V[X_j]+\epsilon}} X_{\jj}^{'} - (\frac{\gamma_j E[X_j]}{\sqrt{V[X_j]-\epsilon}} - \beta_j)\cdot \bm{1}
\end{align*}
\noindent Therefore, the output channel $Y_j$ can be reconstructed by at most $\cii$ principal channels in $X$, plus a constant vector $\bm{1}$. Hence $\mu_Y\leq \log\ceiling{2^{\mu_X}} + 1$.
\end{proof}

\subsection{Proof of Theorem~\ref{theorem:CEntropy_pooling}}
\label{subsec:entropy_pooling}
\begin{proof}
Let $\cii = \left \lceil 2^{\mu_X} \right \rceil$, then input $X_j = \sum_{\jj=1}^{\cii} a_{j,\jj}\cdot X_{\jj}^{'}$, where $X^{'}$ is the principle channels of $X$. With $k\times k$ average operator, we have
\begin{align*}
% \label{eqn:thm5}
\small
&Y_{i}(h,w) 
= \frac{1}{k^2}\sum_{h^{'}=1}^{k}\sum_{w^{'}=1}^{k}X_{i}((h-1)k+h^{'},(w-1)k+w^{'}) \notag \\
&= \frac{1}{k^2}\sum_{h^{'}=1}^{k}\sum_{w^{'}=1}^{k}\sum_{\jj=1}^{\cii}a_{i,\jj}X_{\jj}^{'}((h-1)k+h^{'},(w-1)k+w^{'}) \notag \\
&=\sum_{\jj=1}^{\cii}a_{i,\jj}\cdot \frac{\sum_{h^{'}=1}^{k}\sum_{w^{'}=1}^{k}X_{\jj}^{'}((h-1)k+h^{'},(w-1)k+w^{'})}{k^2}
\end{align*}

\noindent Therefore, each channel in $Y$ can be seen as an accumulation of $\cii$ principle channels obtained by conducting average pooling on $X^{'}$. Therefore, $Y$ can be constructed by at most $\cii$ principle channels. Hence, $\mu_Y \leq  \log \cii = \log \left \lceil 2^{\mu_X} \right \rceil$
\end{proof}

\subsection{Proof of Theorem \ref{theorem:dp}}\label{subsec:dp}
\begin{proof}
\newcommand{\MXu}{\mathcal{M}(X^{(U)})}
\newcommand{\MXun}{\mathcal{M}(X^{'(U)})}
\newcommand{\abs}[1]{\left| #1 \right|}
Due to the random sampling process,  two cases might arise: 1) if $X^i \notin \mathcal{S}$, $\norm{\mathcal{S}^{(U)}-\mathcal{S}^{'(U)}} = 0$; 2) if $X^i \in \mathcal{S}$, $\norm{\mathcal{S}^{(U)}-\mathcal{S}^{'(U)}} \leq \Delta_2$.

\noindent Since $\mathcal{S}$ and $\mathcal{S}^{'}$ can only differ at one position, without loss of generality, we assume $X$ and $X^{'}$ are the data that might be different in $\mathcal{S}$ and $\mathcal{S}^{'}$. According to data decomposition, $X = \XT + \XUT$, and $X^{'} = X^{'(T)}+X^{'(U)}$. With noise $z = \mathcal{N}(0,\sigma^2I)$ added to $\XUT$ and $X^{'(U)}$, we have 
\begin{equation*}
    \mathcal{M}(\XUT) = \XUT + z,\quad \mathcal{M}(X^{'(U)}) = X^{'(U)} + z.
\end{equation*}

\noindent Given $0<\epsilon\leq 1, \delta>0$, we next focus on deriving the tail bound $\delta$. We define $C$ as 
\begin{equation*}
    C = \log{\frac{\Pr(\MXu=\XUT+z)}{\Pr(\MXun=\XUT+z)}}.
\end{equation*}
% where $z$ is a realization of $Z$. Next, we have
Then, we have
\begin{align*}
    \Pr(\abs{C} \geq \epsilon) = 
    & \Pr(\XUT\neq\XUTn)\cdot \\ &\Pr(\abs{C}\geq \epsilon|\XUT\neq\XUTn).
\end{align*}

\noindent Given $\XUT\neq\XUTn$, let $v = \XUT-\XUTn$,  $C$ can be written as
\begin{align*}
    C &= \log{\frac{\exp{(-\norm{z}^2/2\sigma^2)}}{\exp{(-\norm{z+v}^2/2\sigma^2)}}} \\
    &= -\frac{1}{2\sigma^2} ( \norm{z}^2 - \norm{z+v}^2 ) \\
    &=\frac{1}{2\sigma^2}(2\left\langle z, v \right\rangle + \norm{v}^2),
\end{align*}
where $\left\langle ., . \right\rangle$ denotes the sum of element-wise product. It is easy to verify that $C$ given  $\XUT\neq\XUTn$ is a Gaussian random variable with mean $\frac{\norm{v}^2}{2\sigma^2}$ and variance $\frac{\norm{v}^2}{\sigma^2}$. Since $q = \Pr(\XUT\neq\XUTn)$, $\Pr(\abs{C}\geq \epsilon) \leq \delta$ is equivalent to $\Pr(\abs{C}\geq \epsilon|\XUT\neq\XUTn) \leq \frac{1}{q}\delta=\delta^{'}$. \\
Similar as the classical Gaussian mechanism, we set $\sigma=\frac{t\Delta_2}{\epsilon}$. Since $\norm{v}\leq\Delta_2$, $\Pr(\abs{C}\geq \epsilon|\XUT\neq\XUTn)$ is equivalent to 
\begin{align*}
    &\Pr\left(\abs{STD(C|\XUT\neq\XUTn)}\geq\frac{\epsilon\sigma}{\norm{v}}-\frac{\norm{v}}{2\sigma^2}\right) \\
    &\Leftrightarrow \\
    &\Pr\left(\abs{STD(C|\XUT\neq\XUTn)}\geq t-\frac{\epsilon}{2t}\right)
\end{align*}
$STD$ denotes the standardization of a distribution.
Following the classical Gaussian mechanism \cite{dfbasic}, if we set $t=\sqrt{2\log{(1.25/\delta^{'})}}=\sqrt{2\log{(1.25q/\delta)}}$, we  get
\begin{align*}
    \Pr(\abs{C}\geq\epsilon) &= 
    \Pr(\abs{STD(C|\XUT\neq\XUTn)}\geq t-\frac{\epsilon}{2t}) \\ &\leq \delta.
\end{align*}

\end{proof}

\section{Model Architecture for MI Attack}
\label{sec:miArchitecture}

In this appendix, we provide the model architectures of $G$ and $D$ in Table \ref{tab:generator} and Table \ref{tab:discriminator},  respectively.

\begin{table}[!htb]
\centering
\begin{tabular}{llllll}
\toprule \\
Type & Kernel & Channels & Stride & Padding & Output \\
\midrule
\multicolumn{6}{c}{B1} \\
\midrule
Conv & 3 & 32 & 1 & 1 & $32\times 32$ \\
Conv & 3 & 64 & 2 & 1 & $16\times 16$ \\
Conv & 3 & 128 & 1 & 1 & $16\times 16$ \\
Conv & 3 & 128 & 2 & 1 & $8\times 8$ \\
Conv & 3 & 128 & 1 & 1 & $8\times 8$ \\
\midrule
\multicolumn{6}{c}{B2} \\
\midrule
DeConv & 4 & 256 & 1 & 0 & $4\times 4$ \\
DeConv & 4 & 128 & 2 & 1 & $8\times 8$ \\
\midrule
\multicolumn{6}{c}{Decoder} \\
\midrule
DeConv & 4 & 128 & 2 & 1 & $16\times 16$ \\
DeConv & 4 & 64 & 2 & 1 & $32\times 32$ \\
Conv & 3 & 32 & 1 & 1 & $32\times 32$ \\
Conv & 3 & 3 & 1 & 1 & $32\times 32$ \\
\bottomrule
\end{tabular}
\caption{The generator architecture (B1, B2, and the decoder) is shown. B1 is used to extract the features from the residual data $\XUT$, while B2 is used to generate the latent features. The  decoder then combines these features, and reconstructs the output.}
\label{tab:generator}
\end{table}
\vspace{-10 pt}
\begin{table}[!htb]
\centering
\begin{tabular}{llllll}
\toprule \\
Type & Kernel & Channels & Stride & Padding & Output \\
\midrule
Conv & 3 & 32 & 2 & 1 & $16\times 16$ \\
Conv & 3 & 64 & 2 & 1 & $8\times 8$ \\
Conv & 3 & 128 & 2 & 1 & $4\times 4$ \\
Conv & 3 & 256 & 2 & 1 & $2\times 2$ \\
Conv & 3 & 256 & 2 & 1 & $1\times 1$ \\
\bottomrule
\end{tabular}
\caption{The discriminator architecture is shown. Batch normalization and ReLU is applied after every convolution layer, similar to the generator.}
\label{tab:discriminator}
\end{table}

\section{SVD-Channel Entropy in NN Models} \label{sec:SEntropyModels}
In this appendix, we empirically illustrate how the SVD-channel entropy changes across the layers in DNNs.  

\noindent In Fig.~\ref{fig:entropy}, we show the average SVD-channel entropy and the required number of principal channels to approximate intermediate data across all batches in VGG-19/ImageNet with a randomly initialized model.
% To better visualize the results, we only plot typical layers around ReLU and Pooling layers.
We note the following two key observations.
First, as shown in Figure \ref{fig:entropy:entropy}, the SVD-channel entropy before and after the ReLU layers barely changes. 
As for Pooling (MaxPooling) layers, the SVD-channel entropy usually decreases since the features are down-sampled in these layers.
Second, from Figure \ref{fig:entropy:channels}, we observe that the required number of principal channels to reconstruct the original intermediate features is much less than the number of original kernels. Thus, these observations provide strong evidence for the existence of low-rank structure in NNs. 

\begin{figure}[htb!]
\centering
\begin{subfigure}{.85\linewidth}
    \centering
    \captionsetup{justification=centering}
    \includegraphics[width=.9\linewidth]{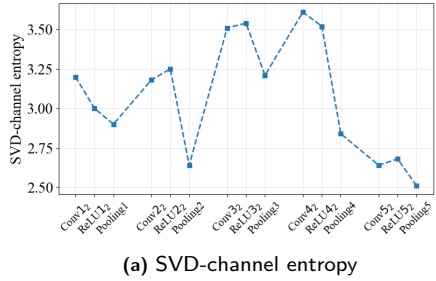}
    \caption{\footnotesize{SVD-channel entropy}}
    \label{fig:entropy:entropy}
\end{subfigure}
\vfill
\begin{subfigure}{.9\linewidth}
    \centering
    \captionsetup{justification=centering}
    \includegraphics[width=.9\linewidth]{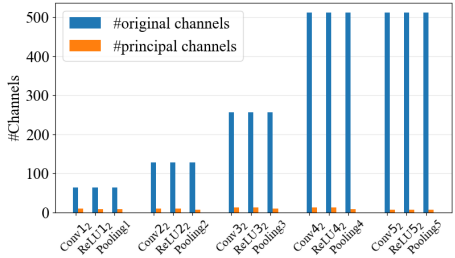}
    \caption{\footnotesize{Number of principal channels vs. original channels}}
    \label{fig:entropy:channels}
\end{subfigure}
\caption{SVD-channel entropy in VGG-19/ImageNet.}
\label{fig:entropy}
\vspace{-.4cm}
\end{figure}

\section{Approximate SVD Evaluation}\label{appx:approxsvd}
In this appendix, we further evaluate the approximated light SVD in Algorithm \ref{alg::lightSVD}. \\
To measure the efficacy of Algorithm \ref{alg::lightSVD}, we calculate the total energy in the remaining data $\Xin{i}$ after extracting the $i$ most principal channels as shown in Equation (\ref{eq:X_svd_light}). It is worth noting that minimizing such residual energy is also one of the objectives of the original SVD algorithm. 
To give a better comparison, we use relative energy, $\left \| \Xin{i }\right \|^2_F / \left \| \overline{X}\right \|^2_F$ and compare with SVD.
Figure \ref{fig:resenergy} shows the relative residual energy by performing SVD and our approximated algorithm in outputs after the first and the second convolution layer in VGG-16. The results are obtained by averaging multiple mini-batches using a randomly initialized model.
We observe that the extracted principal channels using the approximated SVD contain almost the same energy as using SVD. 
Therefore, such an approximation algorithm captures low-rank components in data as using the original SVD algorithm. 
\begin{figure}[htb!]
\centering
\begin{subfigure}{.8\linewidth}
    \centering
    \captionsetup{justification=centering}
    \includegraphics[width=.8\linewidth]{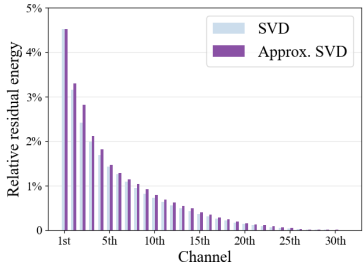}
    \caption{\footnotesize{1st convolution layer.}}
    \label{fig:res:conv1}
\end{subfigure}
\vfill
\begin{subfigure}{.8\linewidth}
    \centering
    \captionsetup{justification=centering}
    \includegraphics[width=.8\linewidth]{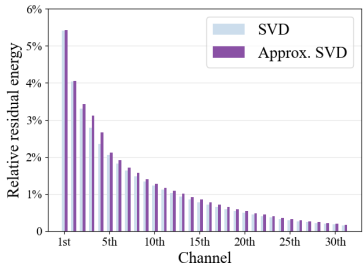}
    \caption{\footnotesize{2nd convolution layer.}}
    \label{fig:res:conv2}
\end{subfigure}
\caption{Energy in residuals after 1st and 2nd convolution layer is shown in VGG-16 using SVD and the approximated SVD in Algorithm \ref{alg::lightSVD}. The extracted principal channels using approximated SVD contain almost the same energy as using SVD.}
\label{fig:resenergy}
\vspace{-.4cm}
\end{figure}

\section{Runtime Breakdown of VGG-19 and ResNet-34}
\label{sec:runtimemore}
In this appendix, we provide the runtime breakdown of VGG-19 and ResNet-34 as shown in Figure \ref{fig:runbreakdown_vgg}. 
Similar to in Figure \ref{fig:runbreakdown}, the communication between SGX and GPUs dominates the runtime in the early convolution layers, especially in backward passes, while it becomes marginal in the later layers.
\begin{figure}[htb!]
\centering
\begin{subfigure}{.9\linewidth}
    \centering
    \captionsetup{justification=centering}
    \includegraphics[width=.85\linewidth]{figure/runtime_vgg19.pdf}
    \caption{\footnotesize{The runtime of VGG-19 is shown.}}
    \label{fig:runbreakdown_vgg16}
\end{subfigure}
\vfill
\begin{subfigure}{.9\linewidth}
    \centering
    \captionsetup{justification=centering}
    \includegraphics[width=.8\linewidth]{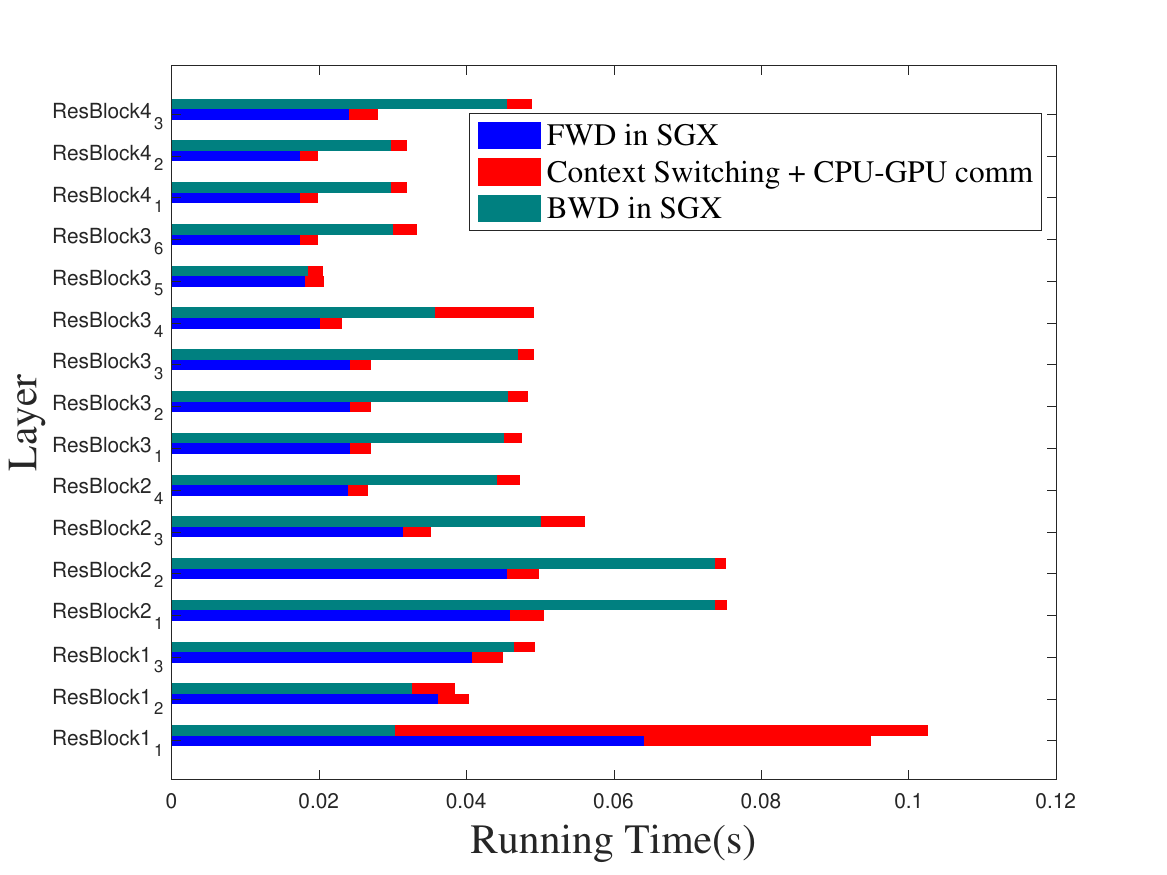}
    \caption{\footnotesize{The runtime of ResNet-34 is shown.}}
    \label{fig:runbreakdown_vgg19}
\end{subfigure}
\caption{The runtime breakdown in VGG-19 and ResNet-34 is shown.Time in data movement is relatively high in early convolutional layers, which then becomes marginal in later  layers.}
\label{fig:runbreakdown_vgg}
\end{figure}

\end{document}